\let\newfloat\newfloat@ltx
\newcommand*{\algrule}[1][\algorithmicindent]{%
	\makebox[#1][l]{%
		\hspace*{.2em}
		\vrule height .75\baselineskip depth .25\baselineskip
	}
}
\def\ALG@printindent{%
	\ifnum \theALG@nested>0
	\ifx\ALG@text\ALG@x@notext
	\else
	\unskip
	\ALG@printindent@tempcnta=1
	\loop
	\algrule[\csname ALG@ind@\the\ALG@printindent@tempcnta\endcsname]%
	\advance \ALG@printindent@tempcnta 1
	\ifnum \ALG@printindent@tempcnta<\numexpr\theALG@nested+1\relax
	\repeat
	\fi
	\fi
}
\patchcmd{\ALG@doentity}{\noindent\hskip\ALG@tlm}{\ALG@printindent}{}{\errmessage{failed to patch}}
\patchcmd{\ALG@doentity}{\item[]\nointerlineskip}{}{}{} 
\definecolor{TITLE}{rgb}{0,0,0}
\definecolor{midblue}{rgb}{0.00,0.0,0.80}
\definecolor{darkblue}{rgb}{0.00,0.00,0.45}
\definecolor{SECTION}{rgb}{0.50,0.00,1.00}
\definecolor{THM}{rgb}{0.8,0,0.1}
\definecolor{SEC}{rgb}{0,0,1}
\newtheorem{theorem}{{\color{THM} Theorem}}
\newtheorem{lemma}[theorem]{{\color{THM}Lemma}}
\newtheorem{corollary}[theorem]{{\color{THM}Corollary}}
\newtheorem{conjecture}[theorem]{{\color{THM}Conjecture}}
\theoremstyle{definition}
\newtheorem{definition}[theorem]{{\color{THM}Definition\ }}
\newtheorem{criterion}[theorem]{{\color{THM}Criterion\ }}
\begin{document}
	
	\title{Noncontextual coloring of orthogonality hypergraphs}

	\author{Mohammad H. Shekarriz}
	\email{mhshekarriz@yazd.ac.ir}
	\affiliation{Department of Mathematics, Yazd University, 89195-741, Yazd, Iran}
	
	\author{Karl Svozil}
	\email{svozil@tuwien.ac.at}
	\homepage{http://tph.tuwien.ac.at/~svozil}
	
	\affiliation{Institute for Theoretical Physics,
		Vienna  University of Technology,
		Wiedner Hauptstrasse 8-10/136,
		1040 Vienna,  Austria}

	\date{\today}
	
	\begin{abstract}
		We discuss representations and colorings of orthogonality hypergraphs in terms of their two-valued states interpretable as classical truth assignments. Such hypergraphs, if they allow for a faithful orthogonal representation, have quantum mechanical realizations in terms of intertwined contexts or maximal observables that are widely discussed as empirically testable criteria for contextuality. Reconstruction is possible for the class of perfectly separable hypergraphs. Colorings can be constructed from a minimal set of two-valued states. Some examples from exempt categories are presented that either cannot be reconstructed by two-valued states or whose two-valued states cannot yield a chromatic number that is equal to the maximal clique number.
	\end{abstract}
	
	\keywords{
		Kochen-Specker theorem, gadget hypergraphs, hypergraph reconstruction, (perfectly) separable hypergraphs}
	\pacs{03.65.Ca, 02.50.-r, 02.10.-v, 03.65.Aa, 03.67.Ac, 03.65.Ud}
	
	\maketitle

	\section{Reconstructability and coloring of quantum contexts}
	
	Current arguments validating quantum contextuality start with configurations of quantum observables whose intertwined contexts can be structurally expressed as (hyper)graphs. These structures, when interpreted classically, exhibit features and predictions that contradict the outcome of the respective quantum observables. Therefore, (hyper)graph techniques proved to be a useful tool for quantum-vs-classical modeling.
	In what follows we shall pursue a related question: when is it still possible to characterize those (hyper)graphs by purely classical means? That is, given all two-valued states interpretable as classical truth assignments, is it possible to reconstruct the (hyper)graphs that, up to isomorphisms, characterize those observables?
	Inspired by and extending Theorem~0 of the 1967 paper of Kochen and Specker~\cite{kochen1} we shall present demarcation criteria for (hyper)graph reconstructability.
	
	We shall also explore the connection of two-valued states and coloring of (hyper)graphs, thereby presenting criteria for (re)construction of colorings and the chromatic number of a (hyper)graph from the set of its two-valued states. Thereby two-valued states may yield a systematic, constructive way to both colorings and/neither the reconstruction of (hyper)graphs.
	
	
	The chromatic properties of such (hyper)graphs are directly related to their (non)classical aspects.
	For instance, in quantum logic~\cite{birkhoff-36}
	certain colorings---with the number of colors equal to the clique number that can be identified with the Hilbert space dimension---can be ``reduced'' to or ``collapsed'' into two-valued dispersionless states, which in turn are interpretable as noncontextual classical truth assignments of the elementary propositions represented by vertices of such graphs. If the chromatic number exceeds  the Hilbert space dimension then no uniform, global two-valued state and also no corresponding uniform, global classical truth value assignment exist.
	Thereby, the chromatic number signifies important properties and (non)existence of classical interpretations of the aforementioned
	configurations of observables; in particular, arrangements of observables realizable by quantum means.
	
	One example of the usefulness of these reduction techniques is the construction of a dense yet discontinuous coloring
	of the rational sphere~\cite{meyer:99}---with the resulting states and propositions
	represented by unit vectors with rational coordinates---based on Pythagorean triples~\cite{havlicek-2000}.
	Thereby, a two-valued dispersionless state can be straightforwardly obtained by identifying all but one colors
	of colorings of the sphere~\cite{godsil-zaks}.
	
	The usefulness of colorings is not restricted to reductions of the colors but can be extended to certain properties of operators.
	The chromaticity of observables within a given context---which in quantum mechanics can be essentially identified with an orthonormal basis
	of the associated finite-dimensional Hilbert space---can be associated
	with particular types of spectral forms of maximal operators~\cite[\S~84]{halmos-vs} formed by the
	elements of the contexts.
	Thereby, unit vectors are interpreted as the orthogonal projection operators formed by the respective dyadic products.
	Those orthogonal projection operators within any given context are mutually orthogonal and can be
	inserted into the spectral sum of a (normal, or, more specifically) self-adjoint operator.
	The respective (real) eigenvalues can be identified with some real-valued encoding of the color or chromaticity of the element of the context.
	Thereby a uniform way of defining (intertwining) context identifiable with maximal quantum operators representing quantum observables is obtained.
	
	If the chromatic number equals the dimension of the Hilbert
	space---which is, at the same time, identical to the clique number of the graph---this yields
	a uniform way of defining (maximal) observables even among complementary observables and physical properties
	(associated with nonidentical contexts).
	However, if the chromatic number exceeds the dimension of Hilbert space,
	the construction yields entirely new potential features of observables:
	if one insists on the uniform global simultaneous (yet counterfactual~\cite{specker-60}) existence of observables
	within such structures, then consistency dictates the abandonment of uniform eigenvalues associated with observables within such contexts.
	This holds even though the chromatic number of a (hyper)graph and its respective encoding by real values for the spectral sum within a single such context
	cannot exceed the dimension of the pertinent Hilbert space.
	As of today, neither such ``omni-realistic'' escape nor generalization of the Kochen-Specker theorem has been discussed or observed---so we might assume that it is an inapplicable option. In addition, yet it is feasible in principle.
	
	In what follows, we shall first develop the necessary nomenclature and then present some criteria and results with regard to the reconstruction of (hyper)graphs representing
	logics---aka, collections of (intertwined) contexts containing a uniform number of elementary, atomistic propositions.
	We shall also find criteria for the algorithmic (constructive) generation of colorings
	by the set of two-valued states on such (hyper)graphs or logics.

	\section{Nomenclature}

	\subsection{Hypergraphs}
	
	The following terms are used by authors synonymously:
	context, block, (maximal Boolean) subalgebra, (maximal) clique, complete subgraph and \emph{hyperedge}. The same is applied to the terms atom, element and \emph{vertex}.
	
	Greechie has suggested~\cite{greechie:71} to (amendments are indicated by square brackets ``[$\ldots$]'')
	\begin{quote}
		[$\ldots$]
		present [$\ldots$] lattices as unions of [contexts]
		intertwined or pasted together in some fashion
		[$\ldots$]
		by replacing, for example, the $2^n$ elements in the Hasse diagram of the power set
		of an $n$-element set with the [context aka] complete [sub]graph [$K_n$] on $n$ elements.
		The reduction in numbers of elements is considerable but the number of remaining ``links''
		or ``lines'' is still too cumbersome for our purposes.
		We replace the [context aka] complete [sub]graph on $n$ elements by a single smooth curve (usually a straight line)
		containing $n$ distinguished points. Thus we replace $n(n + 1)/2$ ``links'' with a single smooth curve.
		This representation is propitious and uncomplicated provided that
		the intersection of any pair of blocks contains at most one atom.
	\end{quote}
	
	In what follows, we shall refer to such a structure as {Greechie diagram}~\cite{kalmbach-83} (References~~\cite{Greechie1968,svozil-tkadlec,Mckay2000,Pavicic-2005,Bretto-MR3077516,2018-minimalYIYS}
	contain variants thereof). The Greechie diagrammatical representation of such, possibly intertwined, collection of blocks, is a {\em hypergraph}, which is a well-known structure in discrete mathematics. We shall briefly introduce the required terms here, but an interested reader might take a look at Ref.~\cite{Bretto-MR3077516} for further theory of hypergraphs.
	
	A hypergraph  $H$  is an ordered pair $H=(V ;E)$ where $V=V(H)$ is the set of vertices and $E=E(H)$ is a family of subsets of $V$ called the hyperedges.
	It depicts a collection of quantum contexts faithfully represented in an $n$-dimensional Hilbert space, whereby the vertices are identified with elementary quantum propositions whose label assignments are in terms of vectors (or with their respective one-dimensional orthogonal projection operators), and the hyperedges are identified with quantum contexts. An $n$-subset of atoms forms a hyperedge if its elements are mutually orthogonal.
	
	Let $H = (V;E)$ be a hypergraph. The \emph{2-section}  of $H$ is a graph, denoted by $[H]_2$, whose vertices are the same as $V(H)$, and two distinct vertices form an edge if and only if they are on the same hyperedge of $H$. One may think of the 2-section of a hypergraph as the graph associated with the hypergraph. A hypergraph $H$ is called \emph{conformal} if any maximal clique (with respect to the inclusion) of its 2-section $[H]_2$ is on a hyperedge of $H$. If all the hyperedges of a hypergraph $H$ consist of exactly $n$ elements, then $H$ is called \emph{$n$-uniform} \cite{Bretto-MR3077516}.

	Quantum orthogonality hypergraphs are conformal $n$-uniform. Often these hypergraphs are referred to and depicted by their associated graphs; that is, by their 2-sections. Quantum contexts are represented by hyperedges of these quantum hypergraphs, that is, by the maximal cliques of their 2-sections. We also reserve the letter $n$ for the clique number of those 2-sections, so we always mean $n=\omega ([H]_2 )$, which is a constant integer.
	
	To represent orthogonality hypergraphs, we shall concentrate on Greechie diagrams which are pasting~\cite{Greechie1968} constructions~\cite[Chapter~2]{greechie-66-PhD}
	of a homogeneous  single type
	of contexts $K_n$
	where the clique number $n$ is fixed. This means that every hyperedge is shown by a straight line segment or, more general, by a smooth curve which has exactly $n$ elements as vertices on it; that is, the hypergraphs are conformal $n$-uniform.
	
	\subsection{Vertex labeling by vectors}
	
	A \emph{vertex labeling} of an orthogonality (hyper)graph $H$ is a function $f:V(H)\longrightarrow L$ that assigns labels from a set $L$ to vertices of $H$.
	
	\begin{criterion}\label{orthogonality}
		There is a vertex labeling $\bm{x}:V(H)\longrightarrow L$ that assigns a set of $k$ mutually non-colinear vectors in an $n$-dimensional Hilbert space $L$ to vertices of $H$ such that any pair of vertices $a$ and $b$ are adjacent if and only if $\bm{x}(a)$ is orthogonal to $\bm{x}(b)$.
	\end{criterion}
	
	Such a vertex labeling is called an \emph{$n$-dimensional faithful orthogonal representation}~\cite{lovasz-79,lovasz-89,Portillo-2015} (FOR) for~$H$. Any vertex labeling corresponds to a quantum realization in terms of the elementary propositions corresponding to its vector labels:
	every unit vector $\bm{x}$ spans a one-dimensional subspace of Hilbert space that is the orthogonal projection onto that subspace of the Hilbert space.
	
	\subsection{Coloring}
	
	A hypergraph coloring of $H$ is a \emph{proper vertex coloring} which associates colors to vertices of $H$ so that every two vertices lying on a hyperedge receive different colors. That is, the $n$ distinguished points of any single smooth curve in the hypergraph have $n$ different colors.
	The coloring is noncontextual; that is, the coloring of atomic elements common to two or more contexts (intertwining there)
	is independent of the context.
	
	For a hypergraph $H$, the \emph{chromatic number} $m=\chi(H)$ is the minimum number of colors required for a proper coloring of vertices of $H$. Obviously the clique number $n=\omega([H]_2 )$ is a lower~bound. If these numbers are the same, that is, if $m=\chi(H)=\omega([H]_2 )=n$, then one could obtain two-valued measures from colorings by ``projecting'' one of the colors into the value 1, and all the other $n-1$ colors into the value 0~\cite{godsil-zaks,meyer:99,havlicek-2000}. A hypergraph $H$, whose chromatic and clique numbers are equal, is called here \emph{semi-perfect}.

	Finite examples for which the chromatic number exceeds the clique number, that is, $m>n$,
	are the logical structures involved in proofs of the Kochen-Specker theorem.
	Explicit constructions are, for instance,
	$\Gamma_2$ of Ref.~\cite{kochen1},
	and the configurations enumerated in
	Figure~9 of~\cite{svozil-tkadlec},
	Figure~1--3 of~\cite{tkadlec-00},
	Ref.~\cite{cabello-96},
	and Table~I and Figure~2 of Ref.~\cite{2015-AnalyticKS},
	among numerous others
	which have a faithful orthogonal representation~\cite{lovasz-79,lovasz-89,Portillo-2015}
	in ``small dimensions'' greater than two.
	
	\subsection{Two-valued states}
	
	A \emph{state} $t$ on a conformal $n$-uniform orthogonality hypergraph $H$ is a mapping $t:\, V(H) \to [0,1]$ such that for any hyperedge $h$, we have $\sum_{v\in h} t(v)=1$. A \emph{two-valued state} is a state with values in $\{0,1\}$.

	A conformal $n$-uniform orthogonality hypergraph has a \emph{separable set of two-valued states} if for any distinct pair of vertices $u$ and $v$, there is at least one two-valued state, say $t$, such that $t( u ) \neq t(v)$ \cite{svozil-tkadlec}. A hypergraph $H$ has a \emph{unital} set of two-valued states if for each vertex $v \in V(H)$ there is a state $t$ for which we have $t(v)=1$. A hypergraph $H$ is said to be \emph{separable} (respectively, unital) if it has a separable (respectively, unital)
set of two-valued states on its vertices.
	
	A ``true implies false set'' $(a,b)$-TIFS (gadget~\cite{tutte_1954,SZABO2009436,Ramanathan-18}) is a conformal $n$-uniform orthogonality hypergraph $H$ containing two vertices $a$ and $b$  such that for all two-valued states of $H$, we have that $b$ is true only if $a$ is false.
	We call $a$ the ``head'' and $b$ the ``tail'' of $H$.

	Similarly, a ``true implies true set'' $(c,d)$-TITS (gadget) is a conformal $n$-uniform orthogonality hypergraph $H'$ containing two vertices $c$ and $d$ such that for all two-valued states of $H'$, we have that $d$ is true whenever $c$ is true~\cite{2018-minimalYIYS}
	(in this case the converse need not be true as both $c$ and $d$ could be false, or $d$ could be true and $c$ false). That is, $c$ true implies $d$ true.
	
	\subsection{Completion criterion}
	
	The hypergraphs considered here are assumed to be on $k$ vertices, and all their hyperedges (contexts) uniformly contain exactly $n$ vertices. Equivalently, we can assume that our objects are connected graphs such as $G$ on $k$ vertices with clique number $\omega (G)=n$, with the assumption that every vertex $v\in V(G)$ lies on at least one maximal clique (context) of size $n$. This assumption is not always necessary, but it is not harmful to our argument, as we can always add vertices to those contexts that have less number of vertices. Therefore, we can state \emph{the completion criterion} as follows:
	
	\begin{criterion}\label{completion}
		If $a$ and $b$ are adjacent in an orthogonality hypergraph $H$, then there is a hyperedge that contains $a$ and $b$ along with $n-2$ other vertices.
	\end{criterion}
	
	In other words, the completion criterion says that there must be no hyperedge with less than $n$ vertices. The importance of this simple criterion becomes clear in Section \ref{Rec-PSH}. Note that for any two-valued state of the hypergraph $H$, exactly one of these $n$ vertices is assigned true.
	
	\subsection{Separability, set representability and reconstructability}
	
	Another important criterion is about classical structure-preserving representability of quantum observables---that is, the classical representability of quantum logics---in terms of two-valued states. According to Theorem~0 of Kochen and Specker~\cite[p.~67]{kochen1}
	the possibility to ``separate'' and make a distinction between two arbitrary vertices of a (hyper)graph by at least one of its two-valued states is equivalent to homomorphic---that is, structure-preserving---embedability of the quantum observables into a ``larger'' classical Boolean algebra.
	
	An early example of a nonseparable (hyper)graph that is not reconstructable from its set of two-valued states is  $\Gamma_3$ introduced by Kochen and Specker~\cite[p.~70]{kochen1}.
	There is another hypergraph, depicted in Figure~5 of Ref.~\cite{svozil-2017-b} sharing the same Travis~\cite{travis-mt-62} matrix~\cite{greechie-66-PhD}---i. e., the same set of two-valued states---which is not isomorphic to the hypergraph corresponding to $\Gamma_3$.
	
	However, separability does not imply graph theoretic reconstructibility. One reason for this is that end points of TIFS gadgets might not be separable from orthogonal vertices. We shall explicitly present an example (depicted in Figure~\ref{TIFS-non-Rec}) of such a configuration, although we are not able to present a hypergraph that has a faithful orthogonal representation in terms of vertex labeling by vectors.
	
	In general a (hyper)graph is \emph{reconstructable} if the Travis matrix---that is, the set of two-valued states---determines or encodes it completely.
	More explicitly, a reconstructable (hyper)graph is, up to isomorphism,  determined by its Travis matrix---that is, effectively, by permutation of its vertices (the column vectors of the Travis matrix) or the ordering of two-valued states (the row vectors of the Travis matrix). Equivalently, a hypergraph $H$ is reconstructable from its Travis matrix $T_H$ if, for any hypergraph $G$ whose Travis matrix $T_G$ is \emph{equivalent} to $T_H$ --- that is, $T_G$ can be obtained from $T_H$ by permutations of rows and columns --- the (hyper)graphs $H$ and $G$ are isomorphic.

	The only connected conformal 1-uniform orthogonality (hyper)graph is the trivial singleton whose two-valued states consists of only one state that assigns true to the only vertex. In addition, the only connected conformal 2-uniform orthogonality hypergraph --- which is actually a graph --- is $K_2$ whose Travis matrix is equivalent to the $2\times 2$ identity matrix. This is because all other connected graphs have a vertex, say $a$, adjacent to at least two other vertices, say $b$ and $c$, and therefore any representation of the graph into a 2-dimensional Hilbert space has to assign collinear vectors to $b$ and $c$. Hence, hereinafter, we suppose that the dimension of our space $n$, which is the uniform number of vertices on hyperedges, is greater than 2.

	\section{Structure reconstruction from two-valued states}\label{construction-conj}
	
	Suppose that we have a table of all two-valued states of a hypergraph $H$. Under what condition(s) can we reconstruct the graph theoretical structure of $H$  from the set of its two-valued states?  This is what we are going to examine next.
	
	\subsection{Perfectly separable hypergraphs}
	
	Let $H$ be a hypergraph with $V(H)=\{a_1, a_2,\ldots, a_k\}$, and its set of two-valued states contains $s$ elements [which can be shown by saying that $nTS(H)=s$]. Its Travis matrix $T(H) =[t_{ij}]_{s\times k}$ enumerates all two-valued states which are represented by row vectors on the vertices of $H$, arranged in the columns, such that each vertex corresponds to one column. Then, \emph{separability} of $H$ can be extended to the following statement:
	
	\begin{definition}\label{separability}
		The hypergraph $H$ is \emph{perfectly separable} if and only if for all pairs $a_i$ and $a_j$ of vertices of $H$ we obtain the following:
		\begin{itemize}
			\item[1.] There is $r_1 \in \{1,\ldots,s\}$ such that $t_{r_1 i}=0$ and $t_{r_1 j}=1$.
			\item[2.] There is $r_2 \in \{1,\ldots,s\}$  such that $t_{r_2 i}=1$ and $t_{r_2 j}=0$.
			\item[3.] If $a_i$ and $a_j$ are not adjacent in $H$, then there is an $r_3 \in \{1,\ldots,s\}$  such that $t_{r_3 i}=1$ and $t_{r_3 j}=1$.
		\end{itemize}
	\end{definition}
	
	It might seem that an additional condition like
	``If $a_i$ and $a_j$ are not adjacent in $H$, then there is an $r_4 \in \{1,\ldots,s\}$ such that $t_{r_4 i}=0$ and $t_{r_4 j}=0$''
	would make a difference and be independent of conditions 1--3. However, this additional condition can be deduced from condition 1 or 3 because $n\geq 3$ and so there is another vertex $a_k$ which is adjacent to $a_j$, and if $a_i$ is also adjacent to $a_k$, condition 1 implies it, and if $a_i$ and $a_k$ are not adjacent, condition 3 implies it. Thus, we did not include it in Definition~\ref{separability}.
	
	Note that a hypergraph is separable if item 1 or 2 of Definition \ref{separability} holds. Therefore, every perfectly separable orthogonality hypergraph is also separable, but the converse is not always true. Elementary counterexamples are true-implies-false gadgets such as the Specker bug depicted in Figure~\ref{2020-f-SpeckerBug}.
	
	\subsection{Reconstruction of perfectly separable hypergraphs}\label{Rec-PSH}
	
	We know that whenever $a_i$ and $a_j$ are adjacent in $H$ (or, equivalently, they are on the same context), then $t_{ri}=1$  implies $t_{rj}=0$ for $r=1,\ldots,s$. We want to know under what conditions the converse is true as well, i.e., the following \emph{adjacency criterion} holds:
	\begin{criterion}
		\label{AdCr} For all $r=1,\ldots,s$, if $t_{ri}=1$ implies $t_{rj}=0$
		then $a_i$ and $a_j$ are adjacent.
	\end{criterion}
	
	The next theorem asserts that when $H$ is a perfectly separable hypergraph, we can always reconstruct $H$ from the information presented in its Travis matrix $T(H)$ and the adjacency criterion. In addition, conversely, a reconstructable hypergraph using Criterion \ref{AdCr} has to be a perfectly separable hypergraph. In other words, Conjecture \ref{c2} is true for perfectly separable hypergraphs.
	
	\begin{theorem}\label{reconstruct}
		Let $H$ be a hypergraph on $\{a_1, a_2,\ldots, a_k\}$ whose Travis matrix $T(H)$ is available and $\omega([H]_2 )=n\geq 3$. Moreover, suppose that $H'$ is the hypergraph on $\{a_1, a_2,\ldots, a_k\}$ whose adjacency is defined by Criterion \ref{AdCr}. Then $H=H'$ if and only if $H$ is perfectly separable.
	\end{theorem}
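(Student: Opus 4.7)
The plan is to prove both implications of the biconditional separately. Since $H$ and $H'$ share the common vertex set $\{a_1,\dots,a_k\}$ and are assumed conformal $n$-uniform, matching adjacency will match hyperedges, so it suffices to compare edges.

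For the easier direction ($\Leftarrow$), assume $H$ is perfectly separable and establish both $H \subseteq H'$ and $H' \subseteq H$. The inclusion $H \subseteq H'$ is the observation recorded just before Criterion~\ref{AdCr}: if $a_i$ and $a_j$ are adjacent in $H$, they share a hyperedge $h$, and the state-sum condition $\sum_{v \in h} t(v)=1$ together with values in $\{0,1\}$ forces at most one entry per hyperedge per row to carry value $1$, so $t_{ri}=1 \Rightarrow t_{rj}=0$ for every $r$. The reverse inclusion $H' \subseteq H$ is the contrapositive of condition~3 of perfect separability: if $a_i$ and $a_j$ are not adjacent in $H$, condition~3 produces a row $r_3$ with $t_{r_3 i}=t_{r_3 j}=1$, and this violates the defining implication of Criterion~\ref{AdCr}.

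For the harder direction ($\Rightarrow$), assume $H = H'$ and verify conditions 1--3 of Definition~\ref{separability}. Condition~3 is immediate from the contrapositive of Criterion~\ref{AdCr}: non-adjacency in $H = H'$ forces some row to break the implication, exhibiting the required state with both entries $1$. For condition~1 (condition~2 follows by symmetry), I would argue by contradiction. Suppose no row satisfies $t_{ri}=0,\ t_{rj}=1$, equivalently $t_{rj}=1 \Rightarrow t_{ri}=1$ for every $r$, and split on adjacency in $H$. If $a_i$ and $a_j$ are adjacent, combining with the state-sum constraint $t_{ri}=1 \Rightarrow t_{rj}=0$ forces $t_{rj}=0$ in every row; then for every other vertex $a_l$ the implication $t_{rl}=1 \Rightarrow t_{rj}=0$ holds vacuously, so $a_j$ is adjacent to every other vertex in $H'$, and via $H = H'$ together with the completion criterion the hypergraph must collapse to a single hyperedge containing $a_j$, whose indicator state ($t_j=1$, all others $0$) is a legitimate two-valued state, contradicting $t_{rj}=0$ always.

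I expect the main obstacle to be the non-adjacent case: $a_i$ and $a_j$ not adjacent in $H$ while $t_{rj}=1 \Rightarrow t_{ri}=1$ in every row. Here, for any neighbor $a_l$ of $a_i$, the chain $t_{rl}=1 \Rightarrow t_{ri}=0 \Rightarrow t_{rj}=0$ together with $H = H'$ yields the neighborhood inclusion $N_H(a_i) \subseteq N_H(a_j)$, and a symmetric argument threatens to identify $a_i$ and $a_j$ as \emph{twin} vertices with identical columns in $T(H)$. The technical heart of this step is to rule out such a twin configuration under the standing hypotheses ($n \geq 3$, conformality, completion criterion, together with the non-degeneracy implicit in treating $H$ as an orthogonality hypergraph) by exhibiting a two-valued state realizing $(t_{ri},t_{rj})=(0,1)$ from the hyperedge structure around $a_j$. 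Pinning down the structural fact that excludes this twin configuration is, in my view, the most delicate step of the argument.
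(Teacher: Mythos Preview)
Your forward direction and the verification of condition~3 in the converse are correct and match the paper.

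The gaps lie in establishing conditions~1 and~2. In the adjacent sub-case your claim that $H$ ``must collapse to a single hyperedge containing $a_j$'' is false: take $n=3$ with hyperedges $\{a_j,b,c\}$ and $\{a_j,d,e\}$. What your indicator-state argument actually needs is only that $a_j$ lie in \emph{every} hyperedge, and this does follow from $a_j$ being adjacent to all vertices: if some hyperedge $h$ omitted $a_j$, then $h\cup\{a_j\}$ would be an $(n{+}1)$-clique in $[H]_2$, contradicting $\omega([H]_2)=n$. With that correction the adjacent case closes cleanly---and in fact more combinatorially than the paper, which handles it by a two-case split and in Case~2 invokes collinearity in the faithful orthogonal representation.

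For the non-adjacent sub-case---which you rightly flag as the crux---the missing tool is precisely that collinearity argument, and it is not purely combinatorial. Your chain yields only the one-sided inclusion $N_H(a_i)\subseteq N_H(a_j)$; there is no ``symmetric argument'' available, since you assumed only that condition~1 fails, not condition~2, so the twin-vertex picture with identical columns never materializes. But one inclusion suffices: pick any hyperedge $h$ through $a_i$; its remaining $n-1$ vertices lie in $N_H(a_i)\subseteq N_H(a_j)$, so under the faithful orthogonal representation the vector labeling $a_j$ is orthogonal to $n-1$ members of an orthonormal basis of $n$-space and hence collinear with the remaining basis vector, the label of $a_i$. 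This contradicts the standing assumption that distinct vertices receive non-collinear labels. The ``non-degeneracy implicit in treating $H$ as an orthogonality hypergraph'' that you suspected is needed is exactly Criterion~\ref{orthogonality}, and this is how the paper closes the argument.
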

	\begin{proof}
		First suppose that $H$ is perfectly separable. Then, because of item 3 of Definition \ref{separability}, for all $a_i$ and $a_j$ that are not adjacent in $H$ there is a row $r$ in $T(H)$ such that $t_{r i}=1$ and $t_{r j}=1$. Therefore, Criterion \ref{AdCr} in $H'$ can only be satisfied for those vertices that are already adjacent in $H$. Consequently, $H=H'$.
		
		Conversely, suppose that $H=H'$ and $a_i$ and $a_j$ are two distinct non-adjacent vertices. Then the adjacency criterion does not meet for $a_i$ and $a_j$. As a result, there is $r$, $1\leq r\leq s$ such that $t_{ri}=1$ and $t_{rj}=1$, i.e., item 3 of Definition \ref{separability} already holds for $a_i$ and $a_j$. Moreover, if item 1 (or 2) of Definition~\ref{separability} does not hold for $i$ and $j$, then the adjacency criterion makes $a_i$ adjacent to all vertices already adjacent to $a_j$, which cannot happen unless $a_i = a_j$, or $a_i$ is colinear with $a_j$, which is a contradiction.
		
		Therefore, for every pair of non-adjacent vertices of $H$, items 1 to 3 of Definition \ref{separability} hold. If for any pair of adjacent vertices in $H$, namely $a_i$ and $a_j$, both items~1 and~2 of Definition \ref{separability} are also true, then it can be inferred that $H$ must be perfectly separable.
		
		To see this, first notice that if one of items 1 or 2 does not hold for adjacent vertices $a_i$ and $a_j$, then one of them, say $a_j$, is always assigned $0$. Therefore, because $H=H'$, and $H'$ is constructed via Criterion \ref{AdCr}, $a_j$ must be adjacent to any vertex that is assigned $1$ at least once. Hence, we can distinguish the following two cases:
		
		\begin{itemize}
			\item[Case 1.] \emph{The vertex $a_j$ belongs to all contexts of $H$}. Then there is a two-valued state that assigns $1$ to $a_j$ and $0$ to all other vertices, a contradiction to our assumption that $a_j$ is always assigned $0$.
			
			\item[Case 2.] \emph{There is at least one context, $\mathcal{A}$, which does not contain $a_j$}. Since $H=H'$, only one vertex of $\mathcal{A}$, say $a_p$, can be assigned always $0$. Then $a_j$ is adjacent to at least $n-1$ vertices of $\mathcal{A}$ other than $a_p$, and consequently, $a_j$ and $a_p$ are colinear, a contradiction.
		\end{itemize}
		Therefore, items 1 and 2 of Definition \ref{separability} also hold for every pair of adjacent vertices of $H$. Now the conclusion is evident.
	\end{proof}
	
	From the proof of Theorem \ref{reconstruct} we see that, for a reconstructable hypergraph $H$, the most important factor is that at least item 1 or 2 is true for every pair of non-adjacent vertices. It should be noted that item 2 is actually the contraposition for item 1. Yet conversely, if  we only want to imply Criterion \ref{AdCr}, we see that for separable hypergraphs which are not perfectly separable, the reconstruction produces some extra hyperedges for $H'$ and makes it different from $H$. For example, when we try to reconstruct the Specker bug from its two-valued states using the adjacency criterion, we produce an extra hyperedge on two vertices, as depicted in Figure~\ref{Spec-Rec}.
	
	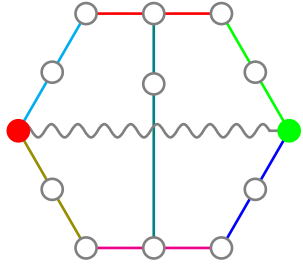
\begin{figure}
		\begin{center}
			\begin{tikzpicture}  [scale=0.6]
				
				\newdimen\ms
				\ms=0.07cm
				
				\tikzstyle{every path}=[line width=1pt]
				
				\tikzstyle{c3}=[circle,inner sep={\ms/8},minimum size=6*\ms]
				\tikzstyle{c2}=[circle,inner sep={\ms/8},minimum size=4*\ms]
				\tikzstyle{c1}=[circle,inner sep={\ms/8},minimum size=0.8*\ms]
				
				\newdimen\R
				\R=30mm     
				
				
				
				\path
				({ 180 - 0 * 360 /6}:\R      ) coordinate(1)
				({ 180 - 30 - 0 * 360 /6}:{\R * sqrt(3)/2}      ) coordinate(2)
				({ 180 - 1 * 360 /6}:\R   ) coordinate(3)
				({ 180 - 30 - 1 * 360 /6}:{\R * sqrt(3)/2}   ) coordinate(4)
				({ 180 - 2 * 360 /6}:\R  ) coordinate(5)
				({ 180 - 30 - 2 * 360 /6}:{\R * sqrt(3)/2}  ) coordinate(6)
				({ 180 - 3 * 360 /6}:\R  ) coordinate(7)
				({ 180 - 30 - 3 * 360 /6}:{\R * sqrt(3)/2}  ) coordinate(8)
				({ 180 - 4 * 360 /6}:\R     ) coordinate(9)
				({ 180 - 30 - 4 * 360 /6}:{\R * sqrt(3)/2}     ) coordinate(10)
				({ 180 - 5 * 360 /6}:\R     ) coordinate(11)
				({ 180 - 30 - 5 * 360 /6}:{\R * sqrt(3)/2}     ) coordinate(12)
				;
				
				
				\draw [color=cyan] (1) -- (2) -- (3);
				\draw [color=red] (3) -- (4) -- (5);
				\draw [color=green] (5) -- (6) -- (7);
				\draw [color=blue] (7) -- (8) -- (9);
				\draw [color=magenta] (9) -- (10) -- (11);    %
				\draw [color=olive] (11) -- (12) -- (1);    %
				\draw [color=teal] (4) -- (10)  coordinate[pos=0.3]  (13);
				\draw [color=gray, snake=coil,segment aspect=0] (1) -- (7);
				
				%
				%
				\draw (1) coordinate[c2,draw=red,fill=red];   %
				\draw (2) coordinate[c2,draw=gray,fill=white];    %
				\draw (3) coordinate[c2,draw=gray,fill=white]; %
				\draw (4) coordinate[c2,draw=gray,fill=white];  %
				\draw (5) coordinate[c2,draw=gray,fill=white];  %
				\draw (6) coordinate[c2,draw=gray,fill=white];
				\draw (7) coordinate[c2,draw=green,fill=green];  %
				\draw (8) coordinate[c2,draw=gray,fill=white];  %
				\draw (9) coordinate[c2,draw=gray,fill=white];
				\draw (10) coordinate[c2,draw=gray,fill=white];  %
				\draw (11) coordinate[c2,draw=gray,fill=white];  %
				\draw (12) coordinate[c2,draw=gray,fill=white];
				\draw (13) coordinate[c2,draw=gray,fill=white];  %
				%
			\end{tikzpicture}
		\end{center}
		\caption{\label{Spec-Rec}
			Reconstruction of the Specker bug using Criterion \ref{AdCr} which gives the gray snake-shaped extra hyperedge that contains only two (the red and green) vertices. This extra hyperedge can be easily eliminated if we use  Criterion \ref{completion}.
		}
	\end{figure}

	However, one might think that we may be able to reconstruct $H$ from $H'$ by finding the extra hyperedge(s) that must be eliminated. To do this, suppose that $\omega([H]_2 )=n\geq 3$ and every vertex is on a hyperedge of size $n$. Then the reconstructed $H'$ must have the same property, i.e., we must have Criterion \ref{completion} for all hyperedges of $H'$, and if a hyperedge does not meet this criterion, then it has to be eliminated. This ``unwanted'' adjacency between $a_i$ and $a_j$ in $H'$ appear only when these two are not on the same context but whenever one is assigned $1$ the other has to be $0$ and vice versa---a typical situation in a TIFS gadget---like what is happening when we try to reconstruct the Specker bug from its two-valued states. However, we show in Sec.~\ref{Rec-B(H)} that this method cannot always guarantee that we can identify these unwanted hyperedges from the table of two-valued states.

	\subsection{Examples of non-reconstructable hypergraphs using Criteria \ref{completion} and  \ref{AdCr}}\label{Rec-B(H)}
	
	Knowing that an orthogonality hypergraph is reconstructable from its  two-valued states gives us the opportunity to find perpendicular pairs of propositions without actually measuring angles between them.
	Recall that, for two-valued states, whenever a vertex is assigned true, all other vertices adjacent to it have to be assigned false. This adjacency criterion (Criterion~\ref{AdCr}) is the most evident property that every pair of adjacent vertices has.
	
	For a nonseparable hypergraph, however, this does not let us to reconstruct the hypergraph because for two distinct vertices $a$ and $b$ with $t(a)=t(b)$ for every two-valued state $t$, it cannot be understood whether or not $a$ is adjacent to all the neighbors of $b$.
	Therefore, it might be tempting to speculate that separability is a ``good'' criterion for reconstructability. Indeed, it might be tempting to claim the following general conjecture, against which we shall shortly present a counterexample.
	
	\begin{conjecture} \label{c2} A hypergraph $H$ is reconstructable from the table of its two-valued states if and only if $H$ is separable.
	\end{conjecture}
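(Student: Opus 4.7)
The plan is to refute Conjecture~\ref{c2} rather than to prove it, by exhibiting a separable hypergraph $H$ together with a non-isomorphic hypergraph $H'$ whose Travis matrix is equivalent to $T(H)$. The forward direction (reconstructable implies separable) is suggested by the earlier discussion of the Kochen--Specker example $\Gamma_3$ and looks defensible, so the interesting direction to attack is the converse; the natural reconstruction procedure through Criteria~\ref{AdCr} and~\ref{completion} can be made to fail systematically in the presence of TIFS gadgets.

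My first step is to isolate the obstruction precisely. If $(a,b)$ is a TIFS pair in $H$, every row of $T(H)$ satisfies $t(a)\cdot t(b)=0$, so Criterion~\ref{AdCr} forces $a$ and $b$ to be adjacent in the reconstructed hypergraph even though they are non-adjacent in $H$. For a single isolated TIFS such as the Specker bug, the spurious adjacency is just a pair and Criterion~\ref{completion} eliminates it, as illustrated in Figure~\ref{Spec-Rec}. The strategy must therefore amplify this defect so that Criterion~\ref{completion} cannot clean it up.

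The concrete construction I would attempt is to paste together several TIFS gadgets whose head--tail pairs collectively span $n$ mutually true-implies-false-related vertices $v_1,\dots,v_n$. If those $n$ vertices are pairwise related in this way through the combined two-valued states of the glued gadgets, Criterion~\ref{AdCr} will declare $\{v_1,\dots,v_n\}$ a clique of size $n$, and this spurious clique will pass Criterion~\ref{completion} because it already has the required full cardinality. One then obtains two distinct hypergraphs $H$ and $H'$ --- the latter having this extra hyperedge --- sharing the same set of two-valued states (since the TIFS logic forces exactly one of $v_1,\dots,v_n$ to be true in every state, so the new hyperedge adds no new constraint). Separability of $H$ is arranged by choosing the individual gadgets and their pasting so that every pair of vertices is still distinguished by at least one state. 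This is the configuration advertised in Figure~\ref{TIFS-non-Rec}.

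The main obstacle is to arrange the TIFS relationships so that $n$ pairwise non-adjacent vertices are \emph{pairwise} true-implies-false-related: producing a single TIFS pair is routine, but synthesizing all $\binom{n}{2}$ simultaneous TIFS relations on a common $n$-tuple, while still keeping the ambient hypergraph separable and structurally coherent, is delicate. A secondary but important difficulty, which the authors explicitly flag, is that the resulting hypergraph need not admit a faithful orthogonal representation in any finite-dimensional Hilbert space, so the refutation of Conjecture~\ref{c2} will be combinatorial rather than quantum-mechanical in its present form. Once both issues are handled --- and confirmed by direct enumeration of the two-valued states of the candidate hypergraph and an explicit isomorphism-class check distinguishing $H$ from $H'$ --- the counterexample disproves the separable-implies-reconstructable direction, leaving the forward implication as a plausible but still unproven statement.
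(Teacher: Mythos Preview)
Your general strategy matches the paper's: refute the converse direction by building a separable $H$ whose Travis matrix is also that of a non-isomorphic $H'$, using TIFS gadgets to manufacture a spurious hyperedge. But there is a real gap in what you think must be engineered.

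You claim that pairwise TIFS relations among $v_1,\dots,v_n$ suffice, and that ``the TIFS logic forces exactly one of $v_1,\dots,v_n$ to be true in every state.'' This is false: pairwise TIFS only guarantees that \emph{at most} one of the $v_i$ is true. The single-layer triangle of TIFS gadgets (the paper's Figure~\ref{layer-graph}) already achieves pairwise TIFS among three non-adjacent vertices, yet it typically admits many two-valued states assigning all three the value $0$. In that situation $H' = H \cup \{\{v_1,\dots,v_n\}\}$ has a \emph{strictly smaller} set of two-valued states than $H$, so the Travis matrices differ and $H'$ does not witness non-reconstructability at all.

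The paper's actual obstacle --- and its main combinatorial idea --- is to force \emph{exactly} one of the $v_i$ to be true in every state. This is done not by one layer but by stacking three copies of the triangle and binding the corresponding corners by three genuine contexts $\{a,a',a''\}$, $\{b,b',b''\}$, $\{c,c',c''\}$; a short pigeonhole argument propagated through the layers then excludes the all-zero assignment on $\{a,b,c\}$. Only after this step does the phantom hyperedge $\{a,b,c\}$ impose no new constraint, so that $B(G)$ and $B(G)'$ share the same Travis matrix. Your identification of ``the main obstacle'' as synthesizing the $\binom{n}{2}$ pairwise TIFS relations is therefore off target: that part is the easy half, and what you treat as an automatic consequence (``exactly one true'') is the hard additional requirement that your sketch does not address. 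Separability of the full construction is then handled by a separate case analysis (the paper's Lemma~\ref{Sep-lem}), which needs the further hypothesis that the gadget $G$ contains no TITS sub-gadget --- another ingredient your outline does not anticipate.
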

	
	From Theorem~\ref{reconstruct} we know that for perfectly separable hypergraphs Conjecture~\ref{c2} holds.
	Can we generalize this to separability?
	
	In what follows, we present a counterexample: a hypergraph that is separable but not perfectly separable. For the sake of a contradiction with Conjecture~\ref{c2} we introduce a configuration for which adjacency of vertices cannot be determined by the enumeration of all two-valued states (the Travis matrix) alone. The argument uses a maximal clique number $n=3$, but can be generalized to higher clique numbers.
	
	A counterexample to Conjecture~\ref{c2} is an orthogonality hypergraph $H$ that has some non-adjacent vertices but Criteria~\ref{completion} and~\ref{AdCr} detect and identify their adjacency ``as if'' they were  on a hyperedge in the reconstructed hypergraph $H'$.  Consequently these two hypergraphs $H$ and $H'$ are not the same. Note that, up to permutations of rows and columns, their Travis matrices are the same. Since our dimension is 3, we must have at least three vertices $a$, $b$, and $c$, which are not adjacent in $H$, but whenever one is assigned true by a two-valued state, the other two  have to be assigned false. Moreover, there must not be a two-valued state that simultaneously assigns $a$, $b$, and $c$ false.
	
	Suppose that $G$ is an $(a,b)$-TIFS gadget such that whenever $a$ is assigned true by a two-valued state, $b$ has to be false. As mentioned earlier, the reverse is also true; that is, whenever $b$ is assigned true by a two-valued state, $a$ has to be false. This means that $a$ and $b$ cannot be assigned true at the same time (but they can both be false). Therefore, if we make a triangle, using a true implies false set (TIFS) as its edges, then we have the desired property that whenever one end is true, the other two are false. However, this hypergraph might have several two-valued states that assign false to all its three ends $a$, $b$ and $c$.  Figure~\ref{layer-graph} depicted this layer hypergraph, with $a=a_1 b_3$, $b=a_2 b_1$ and $c=a_3 b_2$.

	\begin{figure}
		\begin{center}
			\begin{tikzpicture}  [scale=1]
				
				\tikzstyle{every path}=[line width=1pt]
				
				\newdimen\ms
				\ms=0.1cm
				\tikzstyle{s1}=[color=black,fill,rectangle,inner sep=3]
				\tikzstyle{c1}=[draw=gray,fill=white,circle,inner sep={\ms/1}]
				

				\coordinate (a1) at (0,0);
				\coordinate (a2) at (3,0);
				\coordinate (a3) at (1.5,2.5981);

				
				
				\draw [color=olive, ->, snake=coil,segment aspect=0,thick] (a1) to node[below] {$G_1$} (a2);
				\draw [color=cyan, ->, snake=coil,segment aspect=0,thick] (a2) to node[right] {$ \; \;  G_2$} (a3);
				\draw [color=orange, ->, snake=coil,segment aspect=0,thick] (a3) to node[left] {$G_3 \; \; $  } (a1);
				
				
				\draw (a1) coordinate[c1,label=below left:$a_{1} b_{3} $];
				\draw (a2) coordinate[c1,label=below right:$a_2 b_1$];
				\draw (a3) coordinate[c1,label=above:$a_3 b_2$];
			\end{tikzpicture}
		\end{center}
		
		\caption{\label{layer-graph}
			A layer hypergraph serving as a quasi-block for the construction of a counterexample to Conjecture \ref{c2}. For $i=1,2,3$, the snake-shaped edges are distinct copies of an arbitrary separable $(a_i , b_i )$-TIFS gadget $G$ such that whenever $a_i$ is assigned true by a two-valued state, $b_i$ has to be false: their respective ends have been ``cyclically folded'' on each other, eg., $a_1$ from $G_1$ is identified with $b_3$ from $G_3$. Note that vertices $a_1 b_3$, $a_2 b_1$ and $a_3 b_2$ are not adjacent.}
	\end{figure}
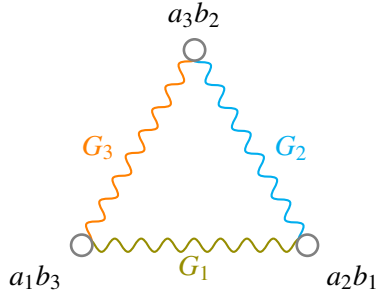
	
	For an $(a_i ,b_i)$-TIFS gadget $G$, Criteria \ref{completion} and \ref{AdCr} do not detect a hyperedge containing $a_1 b_3$, $a_2 b_1$ and $a_3 b_2$ from the two-valued states of a hypergraph such as in Figure~\ref{layer-graph}. This is because usually there are some two-valued states that assign false to all of these three vertices. Therefore, the layer hypergraph of Figure~\ref{layer-graph} must become a part of a larger hypergraph so that in every two-valued state, exactly one of the vertices $a$, $b$ or $c$ be assigned true.
	
	To do so, one possibility is to use three copies of the layer hypergraph of Figure~\ref{layer-graph} and use three extra contexts to bind them together; a configuration drawn in Figure~\ref{TIFS-non-Rec}. When we use $G$ as an $(a_i , b_i)$-TIFS gadget the resulting hypergraph of this construction is denoted by $B(G)$.
	
	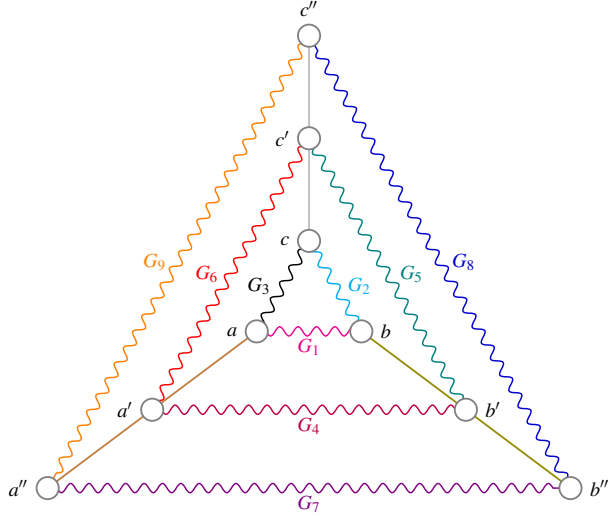
\begin{figure}
		\begin{center}
			\resizebox{0.5\textwidth}{!}{%
				\begin{tikzpicture} [scale=1]
					
					\tikzstyle{every path}=[line width=1pt]
					
					\newdimen\ms
					\ms=0.15cm
					\tikzstyle{s1}=[color=black,fill,rectangle,inner sep=3]
					\tikzstyle{c1}=[draw=gray,fill=white,circle,inner sep={\ms/1}]
					

					\coordinate (a1) at (0,0);
					\coordinate (a2) at (2,0);
					\coordinate (a3) at (1,1.732);
					\coordinate (a4) at (-2,-1.5);
					\coordinate (a5) at (4,-1.5);
					\coordinate (a6) at (1,3.6961);
					\coordinate (a7) at (-4,-3);
					\coordinate (a8) at (6,-3);
					\coordinate (a9) at (1,5.6602);
					
					\draw [color=magenta, ->,snake=coil,segment aspect=0,thick] (a1) to node[below] {$G_1$} (a2);
					\draw [color=cyan, ->,snake=coil,segment aspect=0,thick] (a2) to node[right] {$ \textnormal{       } G_2$} (a3);
					\draw [color=black, ->,snake=coil,segment aspect=0,thick] (a3) to node[left] {$G_3 \textnormal{       }  $  } (a1);
					\draw [color=purple, ->,snake=coil,segment aspect=0,thick] (a4) to node[below] {$G_4$} (a5);
					\draw [color=teal, ->,snake=coil,segment aspect=0,thick] (a5) to node[right] {$ \textnormal{       } G_5$} (a6);
					\draw [color=red, ->,snake=coil,segment aspect=0,thick] (a6) to node[left] {$G_6 \textnormal{       }  $  } (a4);
					\draw [color=violet, ->,snake=coil,segment aspect=0,thick] (a7) to node[below] {$G_7$} (a8);
					\draw [color=midblue, ->,snake=coil,segment aspect=0,thick] (a8) to node[right] {$ \textnormal{       } G_8$} (a9);
					\draw [color=orange, ->,snake=coil,segment aspect=0,thick] (a9) to node[left] {$G_9 \textnormal{       }  $  } (a7);

					\draw [color=brown] (a1) -- (a7);
					\draw [color=olive] (a2) -- (a8);
					\draw [color=lightgray] (a3) -- (a9);
					
					
					\draw (a1) coordinate[c1,label=left:$a$];
					\draw (a2) coordinate[c1,label=right:$b$];
					\draw (a3) coordinate[c1,label=left:$c$];
					\draw (a4) coordinate[c1,label=left:$a'$];
					\draw (a5) coordinate[c1,label=right:$b'$];
					\draw (a6) coordinate[c1,label=left:$c'$];
					\draw (a7) coordinate[c1,label=left:$a''$];
					\draw (a8) coordinate[c1,label=right:$b''$];
					\draw (a9) coordinate[c1,label=above:$c''$];		
				\end{tikzpicture}
			}
		\end{center}
		\caption{\label{TIFS-non-Rec}
			A hypergraph $B(G)$ depicting a counterexample to Conjecture \ref{c2}. For $i=1,\ldots ,9$, the snake-shaped curves indicate different copies of an $(a_i , b_i )$-TIFS gadget $G$ with their terminals suitably identified, that is, $a_1$ from $G_1$ is identified with $b_3$ from $G_3$ as the vertex $a$. Straight lines are ordinary hyperedges, i.~e., vertices $a$, $a'$ and $a''$ are on a context, drawn in brown.}
	\end{figure}
	
	Since $G$ is a TIFS gadget, in every two-valued state of $B(G)$, exactly one of $a$, $b$ or $c$ has to be assigned true while the other two have to be false. This is because $\{a,a',a''\}$, $\{b,b',b''\}$, and $\{c,c',c''\}$ are contexts and need to have exactly one true value in every two-valued states. Consequently, if for example $b$ and $c$ are assigned false by a two-valued state $t_r$, then one of $b'$ and $b''$, and one of $c'$ and $c''$ have to be assigned true by $t_r$. Without loss of generality, let $b'$ be the true one and $b''$ be false. Then, since $G_4$ and $G_5$ are TIFS gadgets, it can be inferred that $a'$ and $c'$ have to be assigned false by $t_r$. Therefore, $c''$ has also to be assigned true because $c$ and $c'$ are false. Again, since $G_9$ is a TIFS gadget, $a''$ has to be false. Now, $a$ has to be assigned true by $t_r$ because $a'$ and $a''$ are both false.
	
	Therefore, for nonadjacent vertices $a$, $b$ and $c$ we have that exactly one of them has to be assigned true, and the other two are false. This is just like if these three vertices are on the same context. It means that if one tries to reconstruct $B(G)$ from its table of two-valued states using criteria \ref{completion} and \ref{AdCr},  extra hyperedges of $\{a,b,c\}$, $\{a',b',c'\}$ and $\{a'',b'',c''\}$ are found. This implies that $B(G)$ is not reconstructable using these criteria.
	
	One question that arises in this regard is the following: suppose we know that $G$ is separable, then does this imply that $B(G)$ is also separable? This is not too hard to answer, and it is always ``yes'' if it does not contain a true implies true set (TITS) gadget.
	
	To proceed, we need one more thing. A function $f:S\longrightarrow T$ defined on a subset $S\subsetneq X$ is said to be \emph{lifted} to $\tilde{f}:X\longrightarrow T$ if $\tilde{f}(a)=f(a)$ for each $a\in S$. It must be mentioned that when $f$ possesses a property, like if $f$ is a proper coloring~\cite{ALBERTSON1998189} or a two-valued state, there might not always be a lift with the same property.
	
	\begin{lemma}\label{Sep-lem}
		Let $G$ be a separable unitary TIFS gadget which does not contain a TITS gadget and $\omega ([G]_2)=3$. Then $B(G)$ is also separable.
	\end{lemma}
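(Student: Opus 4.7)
The plan is to establish separability of $B(G)$ directly: for every pair of distinct vertices $u,v\in V(B(G))$, I will produce a two-valued state $t$ of $B(G)$ with $t(u)\neq t(v)$. The states will be constructed by lifting two-valued states of the individual copies $G_1,\ldots,G_9$ to the full hypergraph, using the TIFS propagation argument recorded in the paragraph preceding the lemma to force an admissible pattern on the nine terminals $a,b,c,a',b',c',a'',b'',c''$ together with the three binding contexts.

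The first and central step is a \emph{lifting lemma}. I would show that for every assignment of the nine terminals that is compatible with the binding contexts $\{a,a',a''\}$, $\{b,b',b''\}$, $\{c,c',c''\}$ and with the TIFS relation on each of the nine copies, there exists a two-valued state of $B(G)$ realizing this terminal pattern. The input is unitality and separability of $G$, together with the TITS-free hypothesis. The last condition is critical: since $G$ contains no $(c,d)$-TITS subgadget, setting one terminal of $G$ to the value $0$ does not force the other terminal to be $0$ (otherwise the pair would be a TITS); similarly the value $1$ is not propagated from one terminal to the other beyond the single TIFS implication $a_i=1\Rightarrow b_i=0$. This residual freedom is what allows locally chosen two-valued states on the copies $G_1,\ldots,G_9$ to be stitched together consistently along the identifications.

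Next I would dispatch the separability claim by cases on the locations of $u$ and $v$. If $u$ and $v$ both lie in the interior of a single copy $G_i$, I use separability of $G_i$ to obtain a local state distinguishing them and apply the lifting lemma. If $u$ lies in the interior of some $G_i$ and $v$ elsewhere, unitality of $G_i$ gives a local state with $u$ set to a chosen value which I then lift and, if necessary, modify by cyclically permuting the roles of $\{a,b,c\}$ via the symmetry of $B(G)$ to make $v$ take the opposite value. Finally, the case in which $u$ and $v$ are among the nine terminals reduces to a finite check: the TIFS propagation shows that every two-valued state of $B(G)$ realises one of the three ``rotated'' terminal patterns in which exactly one of $a,b,c$ is true (and similarly in the primed and double-primed triangles), and these patterns together already separate every pair of terminals.

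The principal obstacle is the lifting step, because the pasting identifications at the terminals can in principle propagate the TIFS implications in $G$ cyclically around a triangle and over-constrain the global state. The TITS-free hypothesis is exactly what averts this: within each copy $G_i$, a single fixed terminal value leaves the complementary terminal free to take either truth value in some two-valued state, so the cycle of implications never closes into an infeasible configuration. With the lifting lemma secured, the case analysis above delivers the separating state in each remaining configuration, completing the argument.
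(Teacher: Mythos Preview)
Your overall strategy—establish a lifting lemma for terminal patterns, then separate pairs by cases according to their location—matches the paper's architecture. The lifting step is essentially what the paper does in its Case~1, using the Travis matrix of the ``underlying'' $3\times 3$ hypergraph to verify that every admissible terminal pattern is realised. So far so good.

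The genuine gap is in your Case~2, where $u$ lies in the interior of some $G_i$ and $v$ lies in the interior of a \emph{different} copy $G_j$. Your proposed remedy—fix $u$ via unitality of $G_i$, lift, and then ``cyclically permute the roles of $\{a,b,c\}$'' to adjust $v$—does not work: the symmetries of $B(G)$ move both $u$ and $v$ simultaneously, so a permutation that flips $t(v)$ generally destroys the value of $t(u)$ you just arranged. More fundamentally, once $\sigma$ on $G_i$ is chosen with $\sigma(u)=1$, the terminal values of $G_i$ are fixed, and the binding contexts then constrain which of the partition classes $s_{j1},s_{j2},s_{j3}$ the restriction to $G_j$ must fall into. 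You need a state $\tau$ in that restricted class with $\tau(v)=0$; such a $\tau$ fails to exist precisely when some terminal of $G_j$ being $1$ forces $v=1$—i.e., when $G$ contains a TITS with $v$ as one endpoint. This is exactly where the paper spends its effort (its Cases~2.1--2.3): it argues by contradiction that if $x$ and $y$ cannot be separated, then one of the copies would exhibit a $(\text{terminal},y)$- or $(\text{terminal},x)$-TITS, violating the hypothesis.

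In short, you have deployed the TITS-free hypothesis only to guarantee that the two \emph{terminals} of each copy do not over-constrain each other (so the lifting lemma goes through). The paper needs it for a second, sharper purpose: to guarantee that no \emph{interior} vertex of $G_j$ is pinned to $1$ by a fixed terminal value. Without that second use, your cross-copy case is not closed.
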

	
	\begin{proof}
		Since $G$ is unitary, there is no vertex that has to be assigned 0 by all the two-valued states of $G$. On the other hand, there is no vertex that is given 1 by all the two-valued states of $G$ because else it must lie on at least one hyperedge with two other vertices, those that have to be always assigned 0, a contradiction to separability (and being unitary) of $G$. Thus, it can be inferred that for a vertex $u$ of $G$, there are states $\varphi$ and $\psi$ such that $\varphi (u)=0$ and $\psi (u)=1$.
		
		For an $(a ,b )$-TIFS gadget $G$, the set of two-valued states are nonempty. Let $s_{1}$ be the set of those states in which $a$ is true, $s_{2}$ be the set of those states that $b$ is true and $s_{3}$ be the set of those states in which both $a$ and $b$ are false. Then these sets $s_{1}$ , $s_{2}$ and $s_{3}$ partition the set of two-valued states of $G$. For an $(a ,b )$-TIFS gadget copy $G_i$, these sets are shown here by $s_{i1}$, $s_{i2}$ and $s_{i3}$. Therefore, if for example a two-valued state is in $s_{1}$, it is a two-valued state of $G$ so that its head, i.~e. $a$, is assigned 1.
		
		To show that $B(G)$ is separable, we show that for any pair of distinct vertices $x$ and $y$, there is a two-valued state of it that gives them different values. There are the following cases:
		
		\begin{itemize}
			\item[Case 1.] \emph{$x$ and $y$ belong to the same copy of $G$, say $G_i$.} \ Because of the symmetry, we can assume without loss of generality that $i=1$.  Since $G$ is separable, there must be a two-valued state, say $t$ on $G$, such that $t(x)\neq t(y)$. We know that $t\in s_{1}\cup s_{2}\cup s_{3}$. If $t\in s_l$ for $l=1,2,3$, then there is a two-valued state for the underlying hypergraph of $B(G)$ in Section~\ref{sec:underlying} such that it agrees with the values of $t(a)$ and $t(b)$. Now, using this two-valued state we define $\Tilde{t}$ for the end vertices $a$, $b$, $c$, $a'$, $b'$ ,$c'$, $a''$, $b''$ and $c''$. Then using appropriate two-valued states of $G$, we can find a suitable two-valued state for internal vertices of $G_2,\ldots,G_9$. Therefore, there are two-valued states such as $\Tilde{t}$ of $B(G)$ which is a lifting for $t$ and  $\Tilde{t}(x) \neq \Tilde{t}(y)$.

			\item[Case 2.] \emph{$x$ and $y$ lie on different copies of $G$, say $G_i$ and $G_j$ respectively.} Then there are three other cases.
			\begin{itemize}
				\item[Case 2.1.] \emph{$G_i$ and $G_j$ lie on the same layer of $B(G)$.} Without loss of generality, suppose that it is the layer consisting of the vertices $a$, $b$, and $c$. Then $G_i$ and $G_j$ have a common vertex that, again without loss of generality, we can assume it is $a$. Let $G_i$'s head and tail be $a$ and $b$, and $G_j$'s head and tail be $c$ and $a$, respectively [see Figure \ref{fig-proof-sep-lem} (a)]. Therefore, every two-valued state of the induced subhypergraph $G_i \cup G_j$ in $B(G)$ is a member of $s_{i1} \cup s_{j2}$, $s_{i2} \cup s_{j3}$ or $s_{i3}\cup s_{j1}$. Suppose on contrary that $x$ and $y$ receive the same value by all two-valued states of $B(G)$. Then, since there is a two-valued state of $B(G)$ that assigns 1 to $x$, at least one of the following statements holds:
				
				\begin{itemize}
					\item[1.] If $x$ is assigned 1 by a two-valued state of $s_{i1}$, then $y$ has to be assigned 1 by all two-valued states of $s_{j2}$. This means that $G_j$ (and therefore $G$) is a $(a,y)$-TITS gadget, a contradiction to our assumption.
					\item[2.] If $x$ is assigned 1 by a two-valued state of $s_{i2}$, then $y$ has to be assigned 1 by all two-valued states of $s_{j3}$. Consequently, $x$ cannot be assigned 0 by $s_{i2}$ because else, it can be lifted to the required separation of $x$ and $y$. Hence, $G_i$ (and therefore $G$) is a $(b,x)$-TITS gadget, a contradiction to our assumption.
					\item[3.] If $x$ is assigned 1 by a two-valued state of $s_{i3}$, then $y$ has to be assigned 1 by all two-valued states of $s_{j1}$. This means that $G_j$ (and therefore $G$) is a $(c,y)$-TITS gadget, again a contradiction to our assumption.
				\end{itemize}

				\item[Case 2.2] \emph{$G_i$ and $G_j$ lie on different layers of $B(G)$, and both ends of $G_i$ and $G_j$ lie on the same contexts of $B(G)$.} Without loss of generality suppose that $G_i$ has $a$ and $b$ and $G_j$ has $a'$ and $b'$ as their heads and tails, respectively [see Figure \ref{fig-proof-sep-lem} (b)]. Therefore, every two-valued state of the induced subhypergraph $G_i \cup G_j$ in $B(G)$ is a member of $s_{i1} \cup s_{j2}\cup s_{j3}$, $s_{i2} \cup s_{j1}\cup s_{j3}$ or $s_{i3}\cup s_{j1}\cup s_{j2}$ (or $s_{j1} \cup s_{i2}\cup s_{i3}$, $s_{j2} \cup s_{i1}\cup s_{i3}$ or $s_{j3}\cup s_{i1}\cup s_{i2}$ which are completely similar). Suppose on contrary that $x$ and $y$ receive the same value by all two-valued states of $B(G)$. Then, since there is a two-valued state of $B(G)$ that assigns 1 to $x$, at least one of the following statements holds:
				
				\begin{itemize}
					\item[1.] If $x$ is assigned 1 by a two-valued state of $s_{i1}$, then $y$ has to be assigned 1 by all two-valued states of $s_{j2}\cup s_{j3}$. This means that $G_j$ (and therefore $G$) is a $(b',y)$-TITS gadget, a contradiction to our assumption.
					\item[2.] If $x$ is assigned 1 by a two-valued state of $s_{i2}$, then $y$ has to be assigned 1 by all two-valued states of $s_{j1}\cup s_{j3}$. Consequently, $x$ cannot be assigned 0 by a $s_{i2}$ because else, it can be lifted to the required separation of $x$ and $y$. Hence, $G_i$ (and therefore $G$) is a $(b,x)$-TITS gadget, a contradiction to our assumption.
					\item[3.] If $x$ is assigned 1 by a two-valued state of $s_{i3}$, then $y$ has to be assigned 1 by all two-valued states of $s_{j1}\cup s_{j2}$. This means that $G_j$ (and therefore $G$) is a $(b',y)$-TITS gadget (and also a $(c',y)$-TITS gadget), again a contradiction to our assumption.
				\end{itemize}
				\item[Case 2.3] \emph{$G_i$ and $G_j$ lie on different layers of $B(G)$, but only one end from $G_i$ and one end from $G_j$ lie on the same context.} Again without loss of generality, suppose that $G_i$ has $a$ and $b$ and $G_j$ has $b'$ and $c'$ as their heads and tails, respectively [see Figure \ref{fig-proof-sep-lem} (c)]. Therefore, every two-valued state of the induced subhypergraph $G_i \cup G_j$ in $B(G)$ is a member of $s_{i1} \cup s_{j1}\cup s_{j2}\cup s_{j3}$, $s_{i2} \cup s_{j2}\cup s_{j3}$, or $s_{i3}\cup s_{j1}\cup s_{j2} \cup s_{j3}$ (or $s_{j1} \cup s_{i1}\cup s_{i2}\cup s_{i3}$, $s_{j2} \cup s_{i2}\cup s_{i3}$, or $s_{j3}\cup s_{i1}\cup s_{i2} \cup s_{i3}$ which can be treated similarly). Suppose on contrary that $x$ and $y$ receive the same value by all two-valued states of $B(G)$. Then, since there is a two-valued state of $B(G)$ that assigns 1 to $x$, at least one of the following statements holds:
				
				\begin{itemize}
					\item[1.] If $x$ is assigned 1 by a two-valued state of $s_{i1}$, then $y$ has to be assigned 1 by all two-valued states of $s_{j1}\cup s_{j2}\cup s_{j3}$, a contradiction to the fact that no vertex of $G$ can be assigned 1 by all the two-valued states.
					\item[2.] If $x$ is assigned 1 by a two-valued state of $s_{i2}$, then $y$ has to be assigned 1 by all two-valued states of $s_{j2}\cup s_{j3}$. Consequently, $G_j$ (and therefore $G$) is a $(c',y)$-TITS gadget, a contradiction to our assumption.
					\item[3.] If $x$ is assigned 1 by a two-valued state of $s_{i3}$, then $y$ has to be assigned 1 by all two-valued states of $s_{j1}\cup s_{j2}\cup s_{j3}$, a contradiction to the fact that no vertex of $G$ can be assigned 1 by all the two-valued states.
				\end{itemize}
				
			\end{itemize}
		\end{itemize}

		We showed that in any case, $x$ and $y$ can be separated by a two-valued state of $B(G)$ (or else there is a contradiction) which concludes the proof.
	\end{proof}

	\begin{figure}
		\begin{center}
			\begin{tabular}{ c c c c c }
				
				\begin{tikzpicture} [scale=0.44]
					
					\tikzstyle{every path}=[line width=1pt]
					
					\newdimen\ms
					\ms=0.15cm
					\tikzstyle{s1}=[color=black,fill,rectangle,inner sep=3]
					\tikzstyle{c1}=[draw=gray,fill=white,circle,inner sep={1}]
					

					\coordinate (a1) at (0,0);
					\coordinate (a2) at (2,0);
					\coordinate (a3) at (1,1.732);
					\coordinate (a4) at (-2,-1.5);
					\coordinate (a5) at (4,-1.5);
					\coordinate (a6) at (1,3.6961);
					\coordinate (a7) at (-4,-3);
					\coordinate (a8) at (6,-3);
					\coordinate (a9) at (1,5.6602);

					\draw [color=lightgray, ->,snake=coil,segment aspect=0,thick] (a1) to node[below] { } (a2);
					\draw [color=lightgray, ->,snake=coil,segment aspect=0,thick] (a2) to node[right] { } (a3);
					\draw [color=lightgray, ->,snake=coil,segment aspect=0,thick] (a3) to node[left] {  } (a1);
					\draw [color=lightgray, ->,snake=coil,segment aspect=0,thick] (a4) to node[below] { } (a5);
					\draw [color=lightgray, ->,snake=coil,segment aspect=0,thick] (a5) to node[right] { } (a6);
					\draw [color=lightgray, ->,snake=coil,segment aspect=0,thick] (a6) to node[left] {  } (a4);
					\draw [color=violet, ->,snake=coil,segment aspect=0,thick] (a7) to node[below] {$G_i$} (a8);
					\draw [color=lightgray, ->,snake=coil,segment aspect=0,thick] (a8) to node[right] { } (a9);
					\draw [color=orange, ->,snake=coil,segment aspect=0,thick] (a9) to node[left] {$G_j \textnormal{       }  $  } (a7);

					\draw [color=lightgray] (a1) -- (a7);
					\draw [color=lightgray] (a2) -- (a8);
					\draw [color=lightgray] (a3) -- (a9);
					
					
					\draw (a1) coordinate[c1];
					\draw (a2) coordinate[c1];
					\draw (a3) coordinate[c1];
					\draw (a4) coordinate[c1];
					\draw (a5) coordinate[c1];
					\draw (a6) coordinate[c1];
					\draw (a7) coordinate[c1,label=below:$a$];
					\draw (a8) coordinate[c1,label=below:$b$];
					\draw (a9) coordinate[c1,label=above:$c$];		
				\end{tikzpicture}
				
				& \hspace{0.6 cm} &
				
				\begin{tikzpicture} [scale=0.44]
					
					\tikzstyle{every path}=[line width=1pt]
					
					\newdimen\ms
					\ms=0.15cm
					\tikzstyle{s1}=[color=black,fill,rectangle,inner sep=3]
					\tikzstyle{c1}=[draw=gray,fill=white,circle,inner sep={1}]
					

					\coordinate (a1) at (0,0);
					\coordinate (a2) at (2,0);
					\coordinate (a3) at (1,1.732);
					\coordinate (a4) at (-2,-1.5);
					\coordinate (a5) at (4,-1.5);
					\coordinate (a6) at (1,3.6961);
					\coordinate (a7) at (-4,-3);
					\coordinate (a8) at (6,-3);
					\coordinate (a9) at (1,5.6602);

					\draw [color=lightgray, ->,snake=coil,segment aspect=0,thick] (a1) to node[below] { } (a2);
					\draw [color=lightgray, ->,snake=coil,segment aspect=0,thick] (a2) to node[right] { } (a3);
					\draw [color=lightgray, ->,snake=coil,segment aspect=0,thick] (a3) to node[left] {  } (a1);
					\draw [color=orange, ->,snake=coil,segment aspect=0,thick] (a4) to node[above] {$G_j$ } (a5);
					\draw [color=lightgray, ->,snake=coil,segment aspect=0,thick] (a5) to node[right] { } (a6);
					\draw [color=lightgray, ->,snake=coil,segment aspect=0,thick] (a6) to node[left] {  } (a4);
					\draw [color=violet, ->,snake=coil,segment aspect=0,thick] (a7) to node[below] {$G_i$} (a8);
					\draw [color=lightgray, ->,snake=coil,segment aspect=0,thick] (a8) to node[right] { } (a9);
					\draw [color=lightgray, ->,snake=coil,segment aspect=0,thick] (a9) to node[left] {  } (a7);

					\draw [color=lightgray] (a1) -- (a7);
					\draw [color=lightgray] (a2) -- (a8);
					\draw [color=lightgray] (a3) -- (a9);
					
					
					\draw (a1) coordinate[c1];
					\draw (a2) coordinate[c1];
					\draw (a3) coordinate[c1];
					\draw (a4) coordinate[c1,label=below:$a'$];
					\draw (a5) coordinate[c1,label=below:$b'$];
					\draw (a6) coordinate[c1];
					\draw (a7) coordinate[c1,label=below:$a$];
					\draw (a8) coordinate[c1,label=below:$b$];
					\draw (a9) coordinate[c1,label=above:$c$];		
				\end{tikzpicture}
				
				& \hspace{0.6 cm} &
				
				\begin{tikzpicture} [scale=0.44]
					
					\tikzstyle{every path}=[line width=1pt]
					
					\newdimen\ms
					\ms=0.15cm
					\tikzstyle{s1}=[color=black,fill,rectangle,inner sep=3]
					\tikzstyle{c1}=[draw=gray,fill=white,circle,inner sep={1}]
					

					\coordinate (a1) at (0,0);
					\coordinate (a2) at (2,0);
					\coordinate (a3) at (1,1.732);
					\coordinate (a4) at (-2,-1.5);
					\coordinate (a5) at (4,-1.5);
					\coordinate (a6) at (1,3.6961);
					\coordinate (a7) at (-4,-3);
					\coordinate (a8) at (6,-3);
					\coordinate (a9) at (1,5.6602);

					\draw [color=lightgray, ->,snake=coil,segment aspect=0,thick] (a1) to node[below] { } (a2);
					\draw [color=lightgray, ->,snake=coil,segment aspect=0,thick] (a2) to node[right] { } (a3);
					\draw [color=lightgray, ->,snake=coil,segment aspect=0,thick] (a3) to node[left] {  } (a1);
					\draw [color=orange, ->,snake=coil,segment aspect=0,thick] (a4) to node[below] { $G_j$ } (a5);
					\draw [color=lightgray, ->,snake=coil,segment aspect=0,thick] (a5) to node[right] { } (a6);
					\draw [color=lightgray, ->,snake=coil,segment aspect=0,thick] (a6) to node[left] {  } (a4);
					\draw [color=lightgray, ->,snake=coil,segment aspect=0,thick] (a7) to node[below] { } (a8);
					\draw [color=lightgray, ->,snake=coil,segment aspect=0,thick] (a8) to node[right] { } (a9);
					\draw [color=violet, ->,snake=coil,segment aspect=0,thick] (a9) to node[left] {$G_i \textnormal{       }  $  } (a7);

					\draw [color=lightgray] (a1) -- (a7);
					\draw [color=lightgray] (a2) -- (a8);
					\draw [color=lightgray] (a3) -- (a9);
					
					
					\draw (a1) coordinate[c1];
					\draw (a2) coordinate[c1];
					\draw (a3) coordinate[c1];
					\draw (a4) coordinate[c1,label=below:$b'$];
					\draw (a5) coordinate[c1,label=below:$c'$];
					\draw (a6) coordinate[c1];
					\draw (a7) coordinate[c1,label=below:$b$];
					\draw (a8) coordinate[c1,label=below:$c$];
					\draw (a9) coordinate[c1,label=above:$a$];		
				\end{tikzpicture}
				
				\\
				
				(a)&&(b)&&(c)	
			\end{tabular}
		\end{center}
		\caption{\label{fig-proof-sep-lem}
			Illustrations of cases 2.1 (a), 2.2 (b) and 2.3 (c) in the proof of Lemma \ref{Sep-lem}.}
	\end{figure}
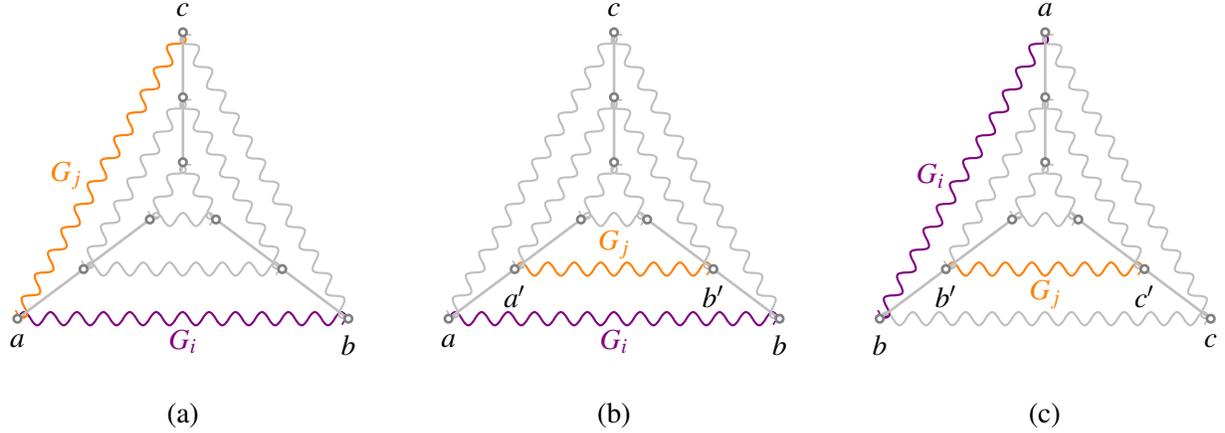
	
	We can go further by calculating the number of two-valued states of $B(G)$ based on $G$'s. Suppose that $G$ is an $(a,b)$-TIFS and has, respectively, $n_a$, $n_b$, and $n_n$ two-valued states that give $a$ true, $b$ true, and none of $a$ and $b$ true. In other words, $n_a =\vert s_1 \vert$, $n_b = \vert s_2 \vert$, and $n_n = \vert s_3 \vert$. Then, by the elementary methods of counting, the number of two-valued states of $B(G)$ is
	\begin{equation}\label{nTS}
		nTS(B(H))= (3+2+1)\cdot n_a^3 \cdot n_b^3 \cdot n_n^3 .
	\end{equation}
	
	For the sake of an example take the Specker bug $G$  discussed in the Appendix Section~\ref{Specker-Bug} of Appendix~\ref{AppendixA}. From its Travis matrix \ref{Specker-Bug-Travice} we know that $n_a = 3$, $n_b =3$ and $n_n=8$. Therefore, Formula \ref{nTS} implies that $B(G)$, which is a separable hypergraph on 108 vertices and 66 contexts, has $$6\cdot 3^3 \cdot 3^3\cdot 8^3=2,239,488$$ two-valued states, a number that can easily be checked via an ordinary computer.
	
	However, it is not difficult to show that this hypergraph $B(G)$, when $G$ is the Specker bug, does not meet our requirements. This is because the two ends of the Specker bug cannot be orthogonal in the 3-space \cite{Cabello-1996-diss}. It can also be discussed using graph theoretical terminology; one orthogonality hypergraph cannot have a cycle of length 4 because else any pair of antipodal vertices of the cycle of length 4 have to be colinear. Therefore, even if $B(G)$ is an orthogonality hypergraph, the reconstructed hypergraph with an extra context of $\{a,b,c\}$ is certainly not.
	
	We have to find a TIFS gadget $H$, other than the Specker bug, in which the distance between its two ends is not 3 (so the two ends can be orthogonal in the reconstructed hypergraph). A candidate for such a hypergraph is shown in Figure~\ref{Baba-Taher}, which is a TIFS on 43 vertices whose end points (say $a$ and $b$) are far enough, so that it is not only separable, but also probably a FOR. This hypergraph has 2589 two-valued states, 45 of which assign $a$ true, 504 give $b$ true and 2040 give both $a$ and $b$ false. In other words, for the hypergraph $H$ we have $n_a=45$, $n_b=504$ and $n_n=2048$.
	
	If we construct $B(H)$, then Lemma \ref{Sep-lem} implies that the resulting hypergraph on 378 vertices is separable and there would be enough space for the three end vertices to be perpendicular.
	
	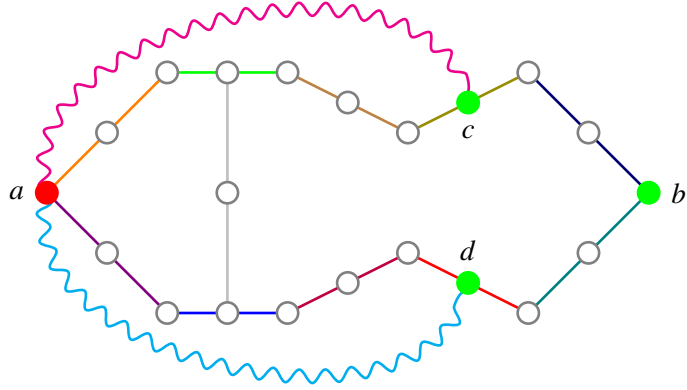
\begin{figure}
		\begin{center}
			\begin{tikzpicture}  [scale=0.8]
				
				\newdimen\ms
				\ms=0.1cm
				
				\tikzstyle{every path}=[line width=1pt]
				
				\tikzstyle{c1}=[draw=gray,fill=white,circle,inner sep={\ms/1}]
				\tikzstyle{c2}=[color=blue,fill,circle,inner sep={\ms/8},minimum size=2*\ms]
				\tikzstyle{c3}=[color=red,fill,circle,inner sep={\ms/8},minimum size=2*\ms]
				
				\newdimen\R
				\R=30mm     
				
				
				
				%

				\coordinate (a1) at (0,0);
				\coordinate (a2) at (1,1);
				\coordinate (a3) at (2,2);
				\coordinate (a4) at (3,2);
				\coordinate (a5) at (4,2);
				\coordinate (a6) at (5,1.5);
				\coordinate (a7) at (6,1);
				\coordinate (a8) at (7,1.5);
				\coordinate (a9) at (8,2);
				\coordinate (a10) at (9,1);
				\coordinate (a11) at (10,0);
				\coordinate (a12) at (9,-1);
				\coordinate (a13) at (8,-2);
				\coordinate (a14) at (7,-1.5);
				\coordinate (a15) at (6,-1);
				\coordinate (a16) at (5,-1.5);
				\coordinate (a17) at (4,-2);
				\coordinate (a18) at (3,-2);
				\coordinate (a19) at (2,-2);
				\coordinate (a20) at (1,-1);
				\coordinate (a21) at (3,0);
				
				\draw [color=magenta, decoration = snake,decorate] (a1) to [bend left=90]  (a8);
				\draw [color=cyan, decoration = snake,decorate] (a1) to [bend right=90]  (a14);
				
				\draw [color=orange] (a1) -- (a3);
				\draw [color=green] (a3) -- (a5);
				\draw [color=brown] (a5) -- (a7);
				\draw [color=olive] (a7) -- (a9);
				\draw [color=darkblue] (a9) -- (a11);
				\draw [color=teal] (a11) -- (a13);
				\draw [color=red] (a13) -- (a15);
				\draw [color=purple] (a15) -- (a17);
				\draw [color=blue] (a17) -- (a19);
				\draw [color=violet] (a19) -- (a1);
				\draw [color=lightgray] (a4) -- (a18);
				
				\draw (a1) coordinate[c1,draw=red,fill=red,label=left:$a$];
				\draw (a2) coordinate[c1];
				\draw (a3) coordinate[c1];
				\draw (a4) coordinate[c1];
				\draw (a5) coordinate[c1];
				\draw (a6) coordinate[c1];
				\draw (a7) coordinate[c1];
				\draw (a8) coordinate[c1,draw=green,fill=green,label=below:$c$];
				\draw (a9) coordinate[c1];
				\draw (a10) coordinate[c1];
				\draw (a11) coordinate[c1,draw=green,fill=green,label=right:$b$];
				\draw (a12) coordinate[c1];
				\draw (a13) coordinate[c1];
				\draw (a14) coordinate[c1,draw=green,fill=green,label=above:$d$];
				\draw (a15) coordinate[c1];
				\draw (a16) coordinate[c1];
				\draw (a17) coordinate[c1];
				\draw (a18) coordinate[c1];
				\draw (a19) coordinate[c1];
				\draw (a20) coordinate[c1];
				\draw (a21) coordinate[c1];		
			\end{tikzpicture}
		\end{center}
		\caption{\label{Baba-Taher}
			An $(a,b)$-TIFS gadget whose distance between its two terminal points $a$ and $b$ is at least five contexts. The snake-like decorated curves indicate Specker bugs, so this hypergraph has 43 vertices.
		}
	\end{figure}
	
	The hypergraph $B(H)$ is on 378 vertices and 228 contexts, and by Formula \ref{nTS} it has $$6\cdot 45^3\cdot 504^3\cdot 2040 ^3=594,252,343,817,330,688,000,000$$ two-valued states. This huge number makes it hard to quickly check by using an ordinary computer.
	
	Another pertinent problem is to show that these hypergraphs---namely $B(H)$ and also its counterpart, the reconstructed hypergraph $B(H)'$ from $B(H)$'s table of two-valued states---have a faithful orthogonal representation; and to enumerate an explicit example of such a representation.
	
	However, it seems that Criterion \ref{orthogonality} is independent of Criteria \ref{completion} and \ref{AdCr}, having an example such as $B(H)$ on 378 vertices raises the possibility that Conjecture \ref{c2} is false for separable hypergraphs that are not perfectly-separable.
	
	One challenge is to find either TIFS that allows for FORs---that is, vertex labellings by vectors---with end points that are orthogonal (that is, their relative angle is $\pi/2$) and at the same time have a separating set of two valued states; or a proof of nonexistence thereof.  Note that since, unlike TIFS,  TITS gadgets in general perform asymmetric, it is not possible to employ a serial composition strategy similar to the one of Kochen and Specker~\cite{kochen1} for a construction of their $\Gamma_2$:  to concatenate a couple of TITS with a single TIFS at their respective end points and thereby to obtain a TIFS with a ``larger aperture'': the TITS relationship ``one-implies-one'' in general works only one way (with respect to exchange of the end points), and not the other way around. A related challenge is to construct a symmetrical TITS, realizing the ``one-implies-one'' relation at both end points.
	
	\section{Two-valued states vs coloring}\label{color-conj}
	
	\subsection{Coloring vs independent partitions}
	A hypergraph [whose vertices all lie on at least one hyperedge of size $\omega([H]_2 )=n$] is called to have ``an $n$-partition system'' if there is a vertex partition $\mathcal{S}=\{S_1 , \ldots , S_n \}$ of $H$ (into exactly $n$ cells) with the following properties:
	\begin{itemize}
		\item[1.] whenever $v,w \in S_i$ for an $i=1,\ldots , n$, we have that $v$ and $w$ are not adjacent, and
		\item[2.] for each $i=1,\ldots , n$ and every $v \in V(G)$, there is a vertex $w \in S_i$ such that either $v =w$ or $v$ and $w$ are adjacent.
	\end{itemize}
	
	When a hypergraph $H$ has an $n$-partition system, we might simply say that $H$ is $n$-partitionable. While $\mathcal{S}$ is a partition, every $S_i$ is non-empty and $\bigcup_{i=1}^{n} S_i = V(G)$. With Property 1, we can be sure that if we assign the true value to all the vertices in $S_i$ and the false value to the rest of vertices, then no two true vertices are adjacent, and consequently, every context has at most one true valued vertex. Moreover, Property 2 assures us that there is no context without a true-valued vertex. (Using graph theoretical terminology, these two properties mean that every $S_i$ is an \emph{independent dominating set} for $H$.)  Therefore, an $n$-partition system actually induces $n$ two-valued states on $H$.
	
	We have the following theorem.
	
	\begin{theorem}\label{separable}
		A hypergraph $H$ is $n$-colorable if and only if it is $n$-partitionable.
	\end{theorem}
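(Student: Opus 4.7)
The plan is to show both implications directly, with the forward direction relying on the observation that a hyperedge of size $n=\omega([H]_2)$ is an $n$-clique in $[H]_2$, so in any proper $n$-coloring it must contain one vertex of each color.

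First I would handle the forward direction. Suppose $H$ is $n$-colorable, and fix a proper $n$-coloring $c:V(H)\to\{1,\dots,n\}$. Define $S_i:=c^{-1}(i)$ for $i=1,\dots,n$. Because $c$ is proper, no two vertices on a common hyperedge receive the same color, so within every $S_i$ no two vertices are adjacent; this is Property 1. Moreover, since $\chi(H)\geq \omega([H]_2)=n$ and we are using exactly $n$ colors, each $S_i$ is non-empty, and $\mathcal{S}=\{S_1,\dots,S_n\}$ is a genuine partition of $V(H)$. For Property 2, pick any $v\in V(H)$ and any $i\in\{1,\dots,n\}$. By the standing assumption of the subsection, $v$ lies on a hyperedge $h$ of size $n$, and $h$ is an $n$-clique in $[H]_2$, so its $n$ vertices receive $n$ distinct colors under $c$; in particular there is a (unique) $w\in h$ with $c(w)=i$, hence $w\in S_i$. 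Either $w=v$ or $w$ is distinct from $v$ and lies on the same hyperedge $h$ as $v$, which in the latter case makes them adjacent in $[H]_2$. This establishes Property 2, so $H$ is $n$-partitionable.

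For the backward direction, suppose $H$ is $n$-partitionable via $\mathcal{S}=\{S_1,\dots,S_n\}$. Define $c:V(H)\to\{1,\dots,n\}$ by setting $c(v)=i$ whenever $v\in S_i$; this is well defined because $\mathcal{S}$ is a partition. Property 1 says each $S_i$ is an independent set in $[H]_2$, so if $u$ and $v$ are adjacent then they lie in different cells and therefore receive different colors. Equivalently, no hyperedge of $H$ contains two vertices of the same color, so $c$ is a proper hypergraph coloring of $H$ using at most $n$ colors, which shows $H$ is $n$-colorable.

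The only subtlety is in Property 2 of the forward direction, and this is where the hypothesis on the subsection (every vertex sits on a hyperedge of size $n$) is essential: without it, a color class could fail to dominate some isolated or under-saturated vertex, and the equivalence would break. Everything else is bookkeeping, and I do not anticipate a genuine obstacle.
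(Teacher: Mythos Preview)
Your proposal is correct and follows essentially the same approach as the paper: both directions define the obvious bijection between color classes and partition cells, with Property~1 corresponding to properness and Property~2 following from the standing assumption that every vertex lies on a size-$n$ hyperedge. The only minor difference is that the paper invokes Property~2 in the backward direction to note that each color appears in every hyperedge, whereas you (correctly) observe that Property~1 alone already guarantees properness; this is a cosmetic distinction, not a substantive one.
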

	\begin{proof}
		It is evident that if $\sigma$ is a proper vertex coloring of $G$ with $\{1,\ldots , n\}$, we can easily find $\mathcal{S}=\{S_1 , \ldots , S_k \}$ by putting $$S_i =\{v\in V(G)\; : \; \sigma(v)=i \}.$$ It is also clear that Property 1 holds because $\sigma$ is a proper coloring. Moreover, we also have Property 2 because of the assumption we made that, in $[H]_2$, every vertex is on a maximal clique of size $n$.
		
		To prove the converse, suppose that we have a partition $\mathcal{S}=\{S_1 , \ldots , S_n \}$ of vertices of $H$ which satisfies Properties 1 and 2 above. Define, $\sigma: V(G) \longrightarrow \{1,\ldots , n\}$ such that
		
		\begin{center}
			$\sigma(v)=i$ if $v\in S_i.$
		\end{center}
		
		While Property 2 implies that every $S_i$ is non-empty, for $i=1\ldots , n$, it also shows that every hyperedge contains a vertex $v$ such that $\sigma (v)=i$. In other words, every color $i=1,\ldots,k$ is used in each hyperedge. Furthermore, Property 1 implies that every $S_i$ is an independent set. Hence $\sigma$ is a proper coloring of $H$ and consequently, $H$ is $n$-colorable.
	\end{proof}
	
	As a result, we can say that for each proper $n$-coloring of $H$ we have $n$ different two-valued states on vertices of $H$. Conversely we can construct exactly $n!$ proper $n$-colorings for $H$ from an available $n$-partition system on vertices of $H$. 
	Therefore, the following corollary (Corollary~\ref{c1}) is a consequence of Theorem~\ref{separable}.
	
	\begin{corollary}\label{c1} The following statements are equivalent:
		
		\begin{itemize}
			\item[(i)]
			The hypergraph $H$ is semi-perfect, i.e., its chromatic number and 2-section clique number are equal.
			\item[(ii)]
			The set of two-valued states contains $n$ members which correspond
			to, or induce, a partitioning of all elements of the partition logic;
			the equivalence relation defined by each one of these $n$ states evaluating to $1$ on some element of every context.
			That is, those $n$ states are $1$ on different atoms of every context.
		\end{itemize}
	\end{corollary}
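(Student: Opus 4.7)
The plan is to derive Corollary \ref{c1} as a direct translation of Theorem \ref{separable} into the language of two-valued states. The bridge between the theorem's notion of an $n$-partition system and condition (ii) is the elementary observation that, on a conformal $n$-uniform orthogonality hypergraph, every two-valued state $t$ assigns the value $1$ to exactly one vertex of each hyperedge: since $\sum_{v\in h} t(v)=1$ with $t(v)\in\{0,1\}$ and $|h|=n$, precisely one summand is $1$. Hence any two-valued state is uniquely determined by its ``true set'' $S=t^{-1}(1)$, and conversely any subset $S\subseteq V(H)$ that meets every hyperedge in exactly one vertex defines a two-valued state. I would open the proof by recording this correspondence, which is what makes conditions on partitions interchangeable with conditions on states.

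For the implication (i)$\Rightarrow$(ii), I would assume $\chi(H)=\omega([H]_2)=n$ and apply Theorem \ref{separable} to obtain an $n$-partition system $\mathcal{S}=\{S_1,\ldots,S_n\}$. Each $S_i$ induces a two-valued state $t_i$ by $t_i(v)=1$ iff $v\in S_i$: Property 1 (independence) ensures at most one true vertex per hyperedge, and Property 2 (domination) ensures at least one, so $t_i$ is a legitimate state. Because $\mathcal{S}$ partitions $V(H)$, the supports $t_1^{-1}(1),\ldots,t_n^{-1}(1)$ are disjoint and cover $V(H)$, and within every context the $n$ states take the value $1$ on $n$ distinct atoms, which is exactly the assertion of (ii).

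For the converse (ii)$\Rightarrow$(i), I would take the $n$ states $t_1,\ldots,t_n$ from (ii), set $S_i=t_i^{-1}(1)$, and verify directly that $\mathcal{S}=\{S_1,\ldots,S_n\}$ is an $n$-partition system. Disjointness and covering follow from the partition-inducing clause of (ii); independence of each $S_i$ is immediate from $t_i$ being a two-valued state (two adjacent vertices cannot both receive value $1$); and Property 2 follows because every hyperedge must contain exactly one vertex assigned $1$ by $t_i$. Theorem \ref{separable} then yields a proper $n$-coloring of $H$, so $\chi(H)\le n$; combined with the trivial bound $\chi(H)\ge\omega([H]_2)=n$, this gives $\chi(H)=\omega([H]_2)$, i.e., semi-perfection.

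The main subtlety, rather than a genuine obstacle, is fixing the dictionary that condition (ii) tacitly relies on: one has to make explicit that ``the set of two-valued states contains $n$ members which correspond to a partitioning'' means precisely that the $n$ supports $t_i^{-1}(1)$ partition $V(H)$, and that the additional clauses about each state being $1$ on some element of every context and on different atoms are automatic once one invokes the two-valued-state definition on $n$-uniform hyperedges. Once this translation is in place, the corollary is a one-line consequence of Theorem \ref{separable} together with the trivial inequality $\chi(H)\ge\omega([H]_2)$.
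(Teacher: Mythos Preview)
Your proposal is correct and follows essentially the same approach as the paper: the paper simply states that Corollary~\ref{c1} is a consequence of Theorem~\ref{separable} (after noting that an $n$-partition system induces $n$ two-valued states and conversely), and you have spelled out this translation in full detail. The dictionary you set up between two-valued states and their ``true sets'' meeting each hyperedge in exactly one vertex is precisely the bridge the paper leaves implicit.
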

	
	This does not exclude the existence of partition logics which are not semi-perfect.
	Indeed, in general, their chromatic number can exceed their 2-section's clique number.
	A concrete example is
	Greechie's $G_{32}$~\cite[Figure~6, p.~121]{greechie:71} mentioned in Appendix~\ref{2021-chroma-G32}, and depicted in
	Figure~\ref{2020-f-GreechieG32}.

	\subsection{Reconstructing coloring from logical assignments}
	
	From Theorem~\ref{separable} we know that when there is an $n$-coloring for an orthogonality hypergraph $H$, there are $n$ two-valued states corresponding to it so that they induce a partition logic. In other words, there are $n$ rows in the Travis matrix of $H$ such that when one of them assigns 1 to a vertex $u$, the rest of them assign 0 to $u$. Consequently, $t_{s_1},\ldots,t_{s_n}$ are the rows of the Travis matrix $T(H)$ corresponding to an $n$-coloring. This yields the $\vert V(H)\vert$--tuple whose entries are one.
	\begin{equation}\label{color-all-two-states}
		\sum_{i=1}^{n} t_{s_i} = \Big(\underbrace{\raisebox{-0pt}{1,1,\ldots,1}}_{\vert V(H)\vert \text{ times}}\Big)=\begin{pmatrix}\mathbf{1}_{1\times \vert V(H)\vert}\end{pmatrix}.
	\end{equation}
	Therefore, when $H$ is $n$-colorable, there is at least one set of $n$ two-valued states that induce a partition logic on $H$.
	
	Algorithm \ref{algorithm1} searches for such states when $T(H)$ is available.
	\begin{enumerate}
		\item[(1)] It takes the Travis matrix $T(H)$ and the clique number $n$, and
		\item[(2)] It gives a list of rows, $A$, from which we can retrieve an $n$ coloring for $H$.
	\end{enumerate}
	
	Variables of this algorithm are as follows:
	\begin{enumerate}
		\item[(1)] $AvailableRows$ which is a list of active rows in $T(H)$, with each such row representing a two-valued state of $H$:
		\item[(2)] $i$, which runs from $1$ to the clique number $n$,
		\item[(3)] $j$, which runs from $1$ to the number of two-valued states $nTS(H)$, which is the number of rows of the Travis matrix; and
		\item[(4)] $RemovedRows$ which is a list of lists, whose $i$th element is the rows of $T(H)$ that become inactive at the $i$th step of filling $A$.
	\end{enumerate}
	
	\def\NoNumber#1{{\def\alglinenumber##1{}\State #1}\addtocounter{ALG@line}{-1}}
	
	\begin{algorithm}
		\caption{Finding an $n$-coloring for $H$  from its set of two-valued states encoded by the Travis matrix}\label{algorithm1}
		\begin{flushleft} $\;$\\ \hspace*{\algorithmicindent}
			\textbf{Input:} $T(H)$, $n$ \Comment{Travis matrix, clique number}\\
			\hspace*{\algorithmicindent} \textbf{Output:} $A$ \Comment{a list of $n$ rows of $T(H)$}
		\end{flushleft}
		\begin{algorithmic}[1]
			
			\State $i\gets 1$ \Comment{start of variable initialization}
			\State $AvailableRows\gets (1,\ldots,nTS(H))$
			\State $A\gets (\;)$
			\State $RemovedRows\gets (\;)$ \Comment{end of variable initialization}
			\NoNumber{ }
			\While{$i\leq n$ and ($i\neq 1$ or $AvailableRows\neq \emptyset$)} \Comment{try all colors}
			
			\If{$AvailableRows=\emptyset$} \Comment{start over again if all two-valued states are exhausted}
			
			\State {\it Append} $RemovedRows[i]$ to $AvailableRows$
			\State {\it Remove} $RemovedRows[i]$ from $RemovedRows$
			\State {\it Remove} $A[i]$ from $A$
			\State $i\gets i-1$
			
			\Else \Comment{try to identify a new color assignment by the next available two-valued state}
			
			\State $j\gets$ first available cell in $AvailableRows$
			\State $A[i] \gets AvailableRows[j]$
			\State {\it Append} $AvailableRows[j]$ to $RemovedRows[i]$
			\State {\it Remove} $AvailableRows[j]$ from $AvailableRows$
			\State $i\gets i+1$
			\State {\it Append} to $RemovedRows[i]$ all $AvailableRows[s]$  for which there is a vertex $u$
			\NoNumber{\hspace{1.2 cm}such that the state of rows}
			$AvailableRows[s]$ and $AvailableRows[j]$ both assign 1 to $u$
			\State {\it Remove} all elements of $RemovedRows[i]$ from $AvailableRows$
			\EndIf
			
			\EndWhile
		\end{algorithmic}
	\end{algorithm}
	
	If the output of Algorithm \ref{algorithm1} has less than $n$ elements, then $H$ has no admissible $n$-coloring---because else Theorem \ref{separable} guarantees that there are $n$ two-valued states partitioning the logic, in which case Algorithm \ref{algorithm1} would have given $\vert A \vert = n$. If $\vert A \vert =n$, then $A$ is a list of $n$ rows in $T(H)$, each of which corresponds to a color class of an $n$-coloring of $H$. In other words, when $s\in A$, the two-valued state $t_s$ presents the color class consisting all the vertices it assigns, or maps to, 1. It is evident that the resulting color classes are independent sets while Formula \ref{color-all-two-states} implies that they cover all the vertices. Consequently, $A$ induces a proper $n$-coloring on vertices of $H$.
	
	Algorithm \ref{algorithm1} is not highly efficient in finding an $n$-coloring for $H$. The main reason is that in the worst case study it has to check all the two-valued states of $H$ whose number, i.e. $nTS(H)$, can grow exponentially in terms of the clique number and number of vertices and hyperedges.
	
	Moreover, one could conjecture that Algorithm~\ref{algorithm1} could be modified to render a coloring even if the (hyper)graph is not $n$-partitionable, in which case Theorem~\ref{separable} does not apply.
	Because even if one has exhausted all combinations of two-valued states one could still attempt to ``complete'' the coloring by identifying the missing colors with
	``suitable segments'' of the remaining two-valued states (if there are any leftovers). Of course, in this way, the column sums of all the respective two valued states cannot be 1, and hence Formula~(\ref{color-all-two-states}) is no longer valid.
	In any case, Brooks' theorem~\cite{Brooks1941,Lovasz1975}---stating that for any connected undirected graph $G$, the chromatic number of $G$ is at most its maximum degree (the maximal number of edges that are incident to some vertex) $\Delta$  unless $G$ is a complete graph or an odd cycle, in which case the chromatic number is $\Delta + 1$---and its generalization to hypergraphs~\cite[page 45, Theorem 3.2]{Bretto-MR3077516} yield an upper bound for the chromatic number of such (hyper)graphs.
	
	
	\section{Summary and concluding remarks}

	We have presented a constructive, algorithmic way to generate a coloring of a (hyper)graph from its set of two-valued states.  The only criterion for the success of this approach is the assertion that the respective hypergraph is semi-perfect, that is, its chromatic number equals the clique number of its 2-section.
	We have been able to find a ``compact'' partition logic within the logical states of the hypergraph by showing that $n$-colorability  is equivalent to finding a partition logic based on exactly $n$ two-valued states. We also presented a detailed algorithm for constructively finding this partition logic and its associated coloring.
	
	With regard to representing and reconstructing (hyper)graphs or logics in terms of their two-valued states, in particular, regarding separability of vertices or elementary propositions, we conjecture that there exist quantum logics with a separable set of two-valued states that cannot be reconstructed from these states. We have presented a hypergraph, namely  $B(G)$ depicted in Figure~\ref{TIFS-non-Rec} of Section~\ref{Rec-B(H)} with a TIFS gadget such as the one depicted in Figure~\ref{Baba-Taher}, that has this characteristic but we could not find a faithful orthogonal representation in a Hilbert space.
	
	Yet, stronger forms of separability, in particular, perfect separability, can be identified that allow (hyper)graphs or logics to be represented and reconstructed in terms of their two-valued states (that is, by their Travis matrices). In addition, while the conditions on perfectly separable (hyper)graphs are rather strong, one can be certain that such a reconstruction exists.
	
	Indeed, such a reconstruction helps to directly identify mutually perpendicular elementary propositions, and thus the contexts corresponding to the maximal operators they form: if an orthogonality (hyper)graph is reconstructible from its set of two-valued states we can deduce the mutual orthogonality of the elementary quantum propositions by just looking at these two-valued states. This facilitates the construction of the (mutually perpendicular) orthogonal operators in the spectral sums associated with the contexts, and thus supports finding a global faithful orthogonal representation, i.~e., the assignment of vectors to vertices, of (hyper)graphs.
	
	Stated differently, we showed that there is a class of hypergraphs, namely perfectly separable ones, that are always reconstructable from their two-valued states.
	However, not all separable graphs are guaranteed to be reconstructible by these means.
	
	Hence, while for perfectly separable (hyper)graphs we can be certain that they can be reconstructed; and for Kochen-Specker type (hyper)graphs that they cannot be reconstructed because there is no two-valued state associated with any classical value assignment, for the remaining (hyper)graphs reconstructability remains an open question.

	\appendix

	\section{Examples}
\label{AppendixA}
	\subsection{Triangle logic}
	
	The coloring procedure of the triangle hypergraph is depicted in Figure~\ref{2020-f-chroma-triangle3}.
	Consider the set of all four two-valued states on the six atoms which can be tabulated by a
	(compactified) Travis matrix $T_{ij}$
	whose rows indicate the
	$i$th state $s_i$ and whose columns
	indicate the atoms $a_j$, respectively; that is, $T_{ij}=s_i(a_j)$:
	\begin{equation}
		T_{ij}=\begin{pmatrix}
			{\color{red}1}&{\color{red}0}&{\color{red}0}&{\color{red}1}&{\color{red}0}&{\color{red}0}\\
			{\color{blue}0}&{\color{blue}0}&{\color{blue}1}&{\color{blue}0}&{\color{blue}0}&{\color{blue}1}\\
			{\color{green}0}&{\color{green}1}&{\color{green}0}&{\color{green}0}&{\color{green}1}&{\color{green}0}\\
			0&1&0&1&0&1
		\end{pmatrix}
		.
	\end{equation}
	It is not too difficult to see that the first three measures, represented by the first three row vectors of the
	Travis matrix, add up to
	$
	\begin{pmatrix}
		1,1,1,1,1,1
	\end{pmatrix}
	$. They can thus be taken as the basis of a coloring.
	
	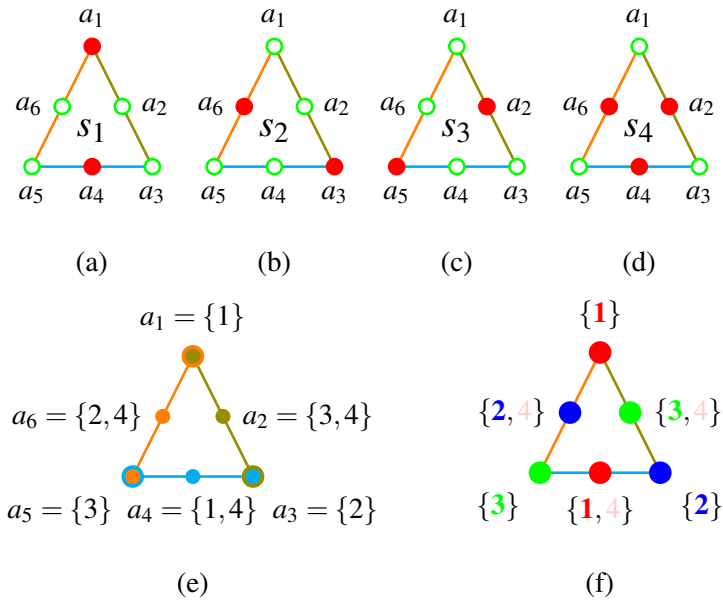
\begin{figure}
		\begin{center}
			\begin{tabular}{ c c c c }
				\begin{tikzpicture}  [scale=0.8]
					
					\tikzstyle{every path}=[line width=1pt]
					
					\newdimen\ms
					\ms=0.1cm
					\tikzstyle{s1}=[fill=red,draw=red,circle,inner sep=2]
					\tikzstyle{c1}=[fill=white,draw=green,circle,inner sep={\ms/8},minimum size=2*\ms]
					

					\coordinate (a1) at  (1,2);
					\coordinate (a2) at (1.5,1);
					\coordinate (a3) at (2,0);
					\coordinate (a4) at (1,0);
					\coordinate (a5) at (0,0);
					\coordinate (a6) at (0.5,1);
					\coordinate (c) at (1,0.6);
					
					
					\draw [color=olive] (a1) -- (a3);
					\draw [color=cyan] (a3) -- (a5);
					\draw [color=orange] (a5) -- (a1);

					
					\draw (a1) coordinate[s1,label=above:$a_1$];
					\draw (a2) coordinate[c1,label=right:$a_2$];
					\draw (a3) coordinate[c1,label=below:$a_3$];
					\draw (a4) coordinate[s1,label=below:$a_4$];
					\draw (a5) coordinate[c1,label=below:$a_5$];
					\draw (a6) coordinate[c1,label=left:$a_6$];
					\node at (c) {\large $s_1$};
					
				\end{tikzpicture}
				&
				\begin{tikzpicture}  [scale=0.8]
					
					\tikzstyle{every path}=[line width=1pt]
					
					\newdimen\ms
					\ms=0.1cm
					\tikzstyle{s1}=[fill=red,draw=red,circle,inner sep=2]
					\tikzstyle{c1}=[fill=white,draw=green,circle,inner sep={\ms/8},minimum size=2*\ms]
					

					\coordinate (a1) at  (1,2);
					\coordinate (a2) at (1.5,1);
					\coordinate (a3) at (2,0);
					\coordinate (a4) at (1,0);
					\coordinate (a5) at (0,0);
					\coordinate (a6) at (0.5,1);
					\coordinate (c) at (1,0.6);
					
					
					\draw [color=olive] (a1) -- (a3);
					\draw [color=cyan] (a3) -- (a5);
					\draw [color=orange] (a5) -- (a1);

					
					\draw (a1) coordinate[c1,label=above:$a_1$];
					\draw (a2) coordinate[c1,label=right:$a_2$];
					\draw (a3) coordinate[s1,label=below:$a_3$];
					\draw (a4) coordinate[c1,label=below:$a_4$];
					\draw (a5) coordinate[c1,label=below:$a_5$];
					\draw (a6) coordinate[s1,label=left:$a_6$];
					\coordinate (c) at (1,0.6);
					\node at (c) {\large $s_2$};
					
				\end{tikzpicture}
				&
				\begin{tikzpicture}  [scale=0.8]
					
					\tikzstyle{every path}=[line width=1pt]
					
					\newdimen\ms
					\ms=0.1cm
					\tikzstyle{s1}=[fill=red,draw=red,circle,inner sep=2]
					\tikzstyle{c1}=[fill=white,draw=green,circle,inner sep={\ms/8},minimum size=2*\ms]
					

					\coordinate (a1) at  (1,2);
					\coordinate (a2) at (1.5,1);
					\coordinate (a3) at (2,0);
					\coordinate (a4) at (1,0);
					\coordinate (a5) at (0,0);
					\coordinate (a6) at (0.5,1);
					\coordinate (c) at (1,0.6);
					
					
					\draw [color=olive] (a1) -- (a3);
					\draw [color=cyan] (a3) -- (a5);
					\draw [color=orange] (a5) -- (a1);

					
					\draw (a1) coordinate[c1,label=above:$a_1$];
					\draw (a2) coordinate[s1,label=right:$a_2$];
					\draw (a3) coordinate[c1,label=below:$a_3$];
					\draw (a4) coordinate[c1,label=below:$a_4$];
					\draw (a5) coordinate[s1,label=below:$a_5$];
					\draw (a6) coordinate[c1,label=left:$a_6$];
					\node at (c) {\large $s_3$};
					
				\end{tikzpicture}
				&
				\begin{tikzpicture}  [scale=0.8]
					
					\tikzstyle{every path}=[line width=1pt]
					
					\newdimen\ms
					\ms=0.1cm
					\tikzstyle{s1}=[fill=red,draw=red,circle,inner sep=2]
					\tikzstyle{c1}=[fill=white,draw=green,circle,inner sep={\ms/8},minimum size=2*\ms]
					

					\coordinate (a1) at  (1,2);
					\coordinate (a2) at (1.5,1);
					\coordinate (a3) at (2,0);
					\coordinate (a4) at (1,0);
					\coordinate (a5) at (0,0);
					\coordinate (a6) at (0.5,1);
					\coordinate (c) at (1,0.6);
					
					
					\draw [color=olive] (a1) -- (a3);
					\draw [color=cyan] (a3) -- (a5);
					\draw [color=orange] (a5) -- (a1);

					
					\draw (a1) coordinate[c1,label=above:$a_1$];
					\draw (a2) coordinate[s1,label=right:$a_2$];
					\draw (a3) coordinate[c1,label=below:$a_3$];
					\draw (a4) coordinate[s1,label=below:$a_4$];
					\draw (a5) coordinate[c1,label=below:$a_5$];
					\draw (a6) coordinate[s1,label=left:$a_6$];
					\node at (c) {\large $s_4$};
					
				\end{tikzpicture}
				\\
				(a)&(b)&(c)&(d)\\
			\end{tabular}
			\\
			\begin{tabular}{ c c c}
				\begin{tikzpicture}  [scale=0.8]
					
					\tikzstyle{every path}=[line width=1pt]
					
					\newdimen\ms
					\ms=0.1cm
					\tikzstyle{c2}=[circle,inner sep={\ms/8},minimum size=3*\ms]
					\tikzstyle{c1}=[circle,inner sep={\ms/8},minimum size=2*\ms]
					

					\coordinate (a1) at  (1,2);
					\coordinate (a2) at (1.5,1);
					\coordinate (a3) at (2,0);
					\coordinate (a4) at (1,0);
					\coordinate (a5) at (0,0);
					\coordinate (a6) at (0.5,1);
					
					
					\draw [color=olive] (a1) -- (a3);
					\draw [color=cyan] (a3) -- (a5);
					\draw [color=orange] (a5) -- (a1);

					
					\draw (a1) coordinate[c2,fill=orange,label=above:${a_1=\{1\}}$];
					\draw (a1) coordinate[c1,fill=olive];
					
					\draw (a2) coordinate[c1,fill=olive,label=right:${a_2=\{3,4\}}$];
					
					\draw (a3) coordinate[c2,fill=olive,label=below right:${a_3=\{2\}}$];
					\draw (a3) coordinate[c1,fill=cyan];
					
					\draw (a4) coordinate[c1,fill=cyan,label=below:${a_4=\{1,4\}}$];
					
					\draw (a5) coordinate[c2,fill=cyan,label=below left:${a_5=\{3\}}$];
					\draw (a5) coordinate[c1,fill=orange];
					
					\draw (a6) coordinate[c1,fill=orange,label=left:${a_6=\{2,4\}}$];
					
				\end{tikzpicture}
				&
				$\qquad$
				&
				\begin{tikzpicture}  [scale=0.8]
					
					\tikzstyle{every path}=[line width=1pt]
					
					\newdimen\ms
					\ms=0.1cm
					\tikzstyle{c2}=[circle,inner sep={\ms/8},minimum size=3*\ms]
					\tikzstyle{c1}=[circle,inner sep={\ms/8},minimum size=2*\ms]
					

					\coordinate (a1) at  (1,2);
					\coordinate (a2) at (1.5,1);
					\coordinate (a3) at (2,0);
					\coordinate (a4) at (1,0);
					\coordinate (a5) at (0,0);
					\coordinate (a6) at (0.5,1);
					
					
					\draw [color=olive] (a1) -- (a3);
					\draw [color=cyan] (a3) -- (a5);
					\draw [color=orange] (a5) -- (a1);

					
					\draw (a1) coordinate[c2,fill=red,label=above:${\{{\color{red} \bf 1}\}}$];
					
					\draw (a2) coordinate[c2,fill=green,label=right:${\{{\color{green} \bf 3},{\color{red!20!white} 4}\}}$];
					
					\draw (a3) coordinate[c2,fill=blue,label=below right:${\{{\color{blue} \bf 2}\}}$];

					\draw (a4) coordinate[c2,fill=red,label=below:${\{{\color{red} \bf 1},{\color{red!20!white} 4}\}}$];
					
					\draw (a5) coordinate[c2,fill=green,label=below left:${\{{\color{green} \bf 3}\}}$];

					\draw (a6) coordinate[c2,fill=blue,label=left:${\{{\color{blue} \bf 2},{\color{red!20!white} 4}\}}$];
					
				\end{tikzpicture}
				\\
				(e)&&(f)
			\end{tabular}
		\end{center}
		\caption{\label{2020-f-chroma-triangle3}
			One (nonunique) coloring~(f) construction of
			the triangle hypergraph of the logic: first compose a (nonunique)
			canonical partition logic~(e) from enumerating the set of all 4 two-valued states depicted in (a)--(d).
			Then choose the context $\{a_1,a_2,a_3\}$, and from this context choose the atom $a_1=\{1\}$.
			Now identify the first color (red) with the index 1, thereby identifying $a_1=\{1\}$ and $a_4=\{1,4\}$ with red.
			Then, delete the index number $4$ from every atom; that is, $a_2=\{3,4\}\rightarrow \{3\}$ and $a_6=\{2,4\}\rightarrow \{2\}$.
			Finally, identify 3 with the second color (green) and 2 with the third color (blue),
			thereby identifying $a_2$ and $a_5$ with green, and $a_3$ and $a_6$ with blue, respectively.
			Note that $s_1$, $s_2$, and $s_3$ ``generate'' a 3-partitioning of the set of atoms $\{a_1,\ldots ,a_6\}$ of this logic.
		}
	\end{figure}
	
	\subsection{House or pentagon or pentagram logic}
	
	The Travis matrix of the house or pentagon or pentagram logic depicted in Fig.~7
	is a matrix representation of its 11 dispersion free states~\cite{wright:pent}
	\begin{equation}
		T_{ij}=\begin{pmatrix}
			{\color{blue}1}& {\color{blue}0}& {\color{blue}0}& {\color{blue}1}& {\color{blue}0}& {\color{blue}1}& {\color{blue}0}& {\color{blue}1}& {\color{blue}0}& {\color{blue}0}  \\
			1& 0& 0& 1& 0& 0& 1& 0& 0& 0  \\
			1& 0& 0& 0& 1& 0& 0& 1& 0& 0  \\
			0& 1& 0& 1& 0& 1& 0& 1& 0& 1  \\
			0& 1& 0& 1& 0& 1& 0& 0& 1& 0  \\
			0& 1& 0& 1& 0& 0& 1& 0& 0& 1  \\
			0& 1& 0& 0& 1& 0& 0& 1& 0& 1  \\
			{\color{green}0}& {\color{green}1}& {\color{green}0}& {\color{green}0}& {\color{green}1}& {\color{green}0}& {\color{green}0}& {\color{green}0}& {\color{green}1}& {\color{green}0}  \\
			0& 0& 1& 0& 0& 1& 0& 1& 0& 1  \\
			0& 0& 1& 0& 0& 1& 0& 0& 1& 0  \\
			{\color{red}0}& {\color{red}0}& {\color{red}1}& {\color{red}0}& {\color{red}0}& {\color{red}0}& {\color{red}1}& {\color{red}0}& {\color{red}0}& {\color{red}1}
		\end{pmatrix}
		.
	\end{equation}
	A coloring can be obtained from the earlier mentioned construction
	which results in three states partitioning all 10 atoms.
	The associated 1st, the 8th and the 11th row vectors
	of $T_{ij}$  are partitioning the 10 atoms.

	\begin{figure}
		\begin{center}
			\begin{tikzpicture}  [scale=1]
				
				\tikzstyle{every path}=[line width=1pt]
				
				\newdimen\ms
				\ms=0.1cm
				\tikzstyle{s1}=[color=red,rectangle,inner sep=3.5]
				\tikzstyle{c3}=[circle,inner sep={\ms/8},minimum size=5*\ms]
				\tikzstyle{c2}=[circle,inner sep={\ms/8},minimum size=3*\ms]
				\tikzstyle{c1}=[circle,inner sep={\ms/8},minimum size=2*\ms]
				
				
				\coordinate (a1) at (0,2);
				\coordinate (a2) at (0,1);
				\coordinate (a3) at (0,0);
				\coordinate (a4) at (1,0);
				\coordinate (a5) at (2,0);
				\coordinate (a6) at (2,1);
				\coordinate (a7) at (2,2);
				\coordinate (a8) at (1.5,{2+(3.5-2)/2});
				\coordinate (a9) at (1,3.5);
				\coordinate (a10) at (0.5,{2+(3.5-2)/2});
				
				
				\draw [color=orange] (a1) -- (a3);
				\draw [color=blue] (a3) -- (a5);
				\draw [color=red] (a5) -- (a7);
				\draw [color=green] (a7) -- (a9);
				\draw [color=gray] (a9) -- (a1);
				
				
				\draw (a1) coordinate[c2,fill=blue,label=left:{$\{{\color{blue}1},2,3\}$}];
				
				\draw (a2) coordinate[c2,fill=green,label=left:{$\{ 4,5,6,7,{\color{green}8} \}$}];
				
				\draw (a3) coordinate[c2,fill=red,label=below left:{$\{  9,10,{\color{red}11}\}$}];
				
				\draw (a4) coordinate[c2,fill=blue,label=below:{$\{ {\color{blue}1},2,4,5,6 \}$}];
				
				\draw (a5) coordinate[c2,fill=green,label=below right:{$\{ 3,7,{\color{green}8} \}$}];
				
				\draw (a6) coordinate[c2,fill=blue,label=right:{$\{ {\color{blue}1},4,5,9,10 \}$}];
				
				\draw (a7) coordinate[c2,fill=red,label=right:{$\{ 2,6,{\color{red}11} \}$}];
				
				\draw (a8) coordinate[c2,fill=blue,label=above right:{$\{ {\color{blue}1},3,4,7,9 \}$}];
				
				\draw (a9) coordinate[c2,fill=green,label=above:{$\{ 5,{\color{green}8},10 \}$}];
				
				\draw (a10) coordinate[c2,fill=red,label=above left:{$\{ 4,6,7,9,{\color{red}11} \}$}];
				
			\end{tikzpicture}
		\end{center}
		\caption{\label{2020-f-chroma-pentagon3}
			Coloring scheme of the house or pentagon or pentagram logic from the set of two-valued states.
		}
	\end{figure}
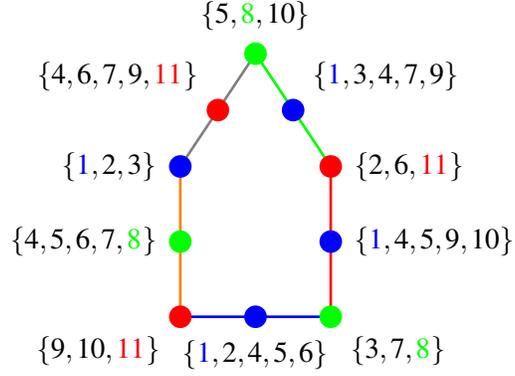

	\subsection{Specker bug gadget}
	\label{Specker-Bug}
	
	The hypergraph depicted in Figure~\ref{2020-f-SpeckerBug}
	is a minimal~\cite{2018-minimalYIYS} true-implies false gadget introduced by
	Kochen and Specker~\cite[Fig.~1, p.~182]{kochen2} (reprinted in~\cite{specker-ges}, see also ~\cite[Figure~1, p.~123]{Greechie1974}, among others).
	It is a subgraph of $G_{32}$ introduced later in Figure~\ref{2020-f-GreechieG32}.
	Its Travis matrix is
	\begin{equation}\label{Specker-Bug-Travice}
		T_{ij}=\begin{pmatrix}
			{\color{blue}1}& {\color{blue}0}& {\color{blue}0}& {\color{blue}1}& {\color{blue}0} &{\color{blue}1}& {\color{blue}0}& {\color{blue}0}& {\color{blue}1}& {\color{blue}0}& {\color{blue}0}& {\color{blue}0}& {\color{blue}0} \\
			1& 0& 0& 0& 1 &0& 0& 1& 0& 1& 0& 0& 0 \\
			1& 0& 0& 0& 1 &0& 0& 0& 1& 0& 0& 0& 1 \\
			0& 1& 0& 1& 0 &1& 0& 1& 0& 0& 1& 0& 0 \\
			0& 1& 0& 1& 0 &1& 0& 0& 1& 0& 0& 1& 0 \\
			0& 1& 0& 1& 0 &0& 1& 0& 0& 0& 1& 0& 0 \\
			{\color{green}0}& {\color{green}1}& {\color{green}0}& {\color{green}0}& {\color{green}1} &{\color{green}0}& {\color{green}0}& {\color{green}1}& {\color{green}0}& {\color{green}1}& {\color{green}0}& {\color{green}1}& {\color{green}0} \\
			0& 1& 0& 0& 1 &0& 0& 1& 0& 0& 1& 0& 1 \\
			0& 1& 0& 0& 1 &0& 0& 0& 1& 0& 0& 1& 1 \\
			0& 0& 1& 0& 0 &1& 0& 1& 0& 1& 0& 1& 0 \\
			0& 0& 1& 0& 0 &1& 0& 1& 0& 0& 1& 0& 1 \\
			0& 0& 1& 0& 0 &1& 0& 0& 1& 0& 0& 1& 1 \\
			0& 0& 1& 0& 0 &0& 1& 0& 0& 1& 0& 1& 0 \\
			{\color{red}0}& {\color{red}0}& {\color{red}1}& {\color{red}0}& {\color{red}0} &{\color{red}0}& {\color{red}1}& {\color{red}0}& {\color{red}0}& {\color{red}0}& {\color{red}1}& {\color{red}0}& {\color{red}1}
		\end{pmatrix}
		.
	\end{equation}
	
	\begin{figure}
		\begin{center}
			\begin{tikzpicture}  [scale=0.8]
				
				\newdimen\ms
				\ms=0.05cm
				
				\tikzstyle{every path}=[line width=1pt]
				
				\tikzstyle{c3}=[circle,inner sep={\ms/8},minimum size=6*\ms]
				\tikzstyle{c2}=[circle,inner sep={\ms/8},minimum size=4*\ms]
				\tikzstyle{c1}=[circle,inner sep={\ms/8},minimum size=0.8*\ms]
				
				\newdimen\R
				\R=30mm     
				
				
				
				\path
				({ 180 - 0 * 360 /6}:\R      ) coordinate(1)
				({ 180 - 30 - 0 * 360 /6}:{\R * sqrt(3)/2}      ) coordinate(2)
				({ 180 - 1 * 360 /6}:\R   ) coordinate(3)
				({ 180 - 30 - 1 * 360 /6}:{\R * sqrt(3)/2}   ) coordinate(4)
				({ 180 - 2 * 360 /6}:\R  ) coordinate(5)
				({ 180 - 30 - 2 * 360 /6}:{\R * sqrt(3)/2}  ) coordinate(6)
				({ 180 - 3 * 360 /6}:\R  ) coordinate(7)
				({ 180 - 30 - 3 * 360 /6}:{\R * sqrt(3)/2}  ) coordinate(8)
				({ 180 - 4 * 360 /6}:\R     ) coordinate(9)
				({ 180 - 30 - 4 * 360 /6}:{\R * sqrt(3)/2}     ) coordinate(10)
				({ 180 - 5 * 360 /6}:\R     ) coordinate(11)
				({ 180 - 30 - 5 * 360 /6}:{\R * sqrt(3)/2}     ) coordinate(12)
				;
				
				
				\draw [color=cyan] (1) -- (2) -- (3);
				\draw [color=red] (3) -- (4) -- (5);
				\draw [color=green] (5) -- (6) -- (7);
				\draw [color=blue] (7) -- (8) -- (9);
				\draw [color=magenta] (9) -- (10) -- (11);    %
				\draw [color=olive] (11) -- (12) -- (1);    %
				\draw [color=teal] (4) -- (10)  coordinate[pos=0.5]  (13);
				
				%
				%
				\draw (1) coordinate[c3,fill=blue,label={left: $\{ {\color{blue}1},2,3\} $}];   %
				\draw (2) coordinate[c3,fill=green,label={above left: $\{ 4,5,6,{\color{green}7},8,9 \}$}];    %
				\draw (3) coordinate[c3,fill=red,label={above left: $\{10,11,12,13,{\color{red}14} \} $}]; %
				\draw (4) coordinate[c3,fill=blue,label={above: $\{ {\color{blue}1},4,5,6\}$}];  %
				\draw (5) coordinate[c3,fill=green,label={above right: $\{ 2,3,{\color{green}7},8,9\} $}];  %
				\draw (6) coordinate[c3,fill=blue,label={above right: $\{ {\color{blue}1},4,5,10,11,12\} $}];
				\draw (7) coordinate[c3,fill=red,label={right: $\{ 6,13,{\color{red}14}\}$}];  %
				\draw (8) coordinate[c3,fill=green,label={below right: $\{ 2,4,{\color{green}7},8,10,11\}$}];  %
				\draw (9) coordinate[c3,fill=blue,label={below right: $\{ {\color{blue}1},3,5,9,12\}$}];
				\draw (10) coordinate[c3,fill=green,label={below: $\{ 2,{\color{green}7},10,13\}$}];  %
				\draw (11) coordinate[c3,fill=red,label={below left: $\{ 4,6,8,11,{\color{red}14}\}$}];  %
				\draw (12) coordinate[c3,fill=green,label={below left: $\{ 5,{\color{green}7},9,10,12,13\}$}];
				\draw (13) coordinate[c3,fill=red,label={right: $\{3,8,9,$}];  %
				\draw (13) coordinate[c3,fill=red,label={below right: $11,12,{\color{red}14}\}$}];  %
			\end{tikzpicture}
		\end{center}
		\caption{\label{2020-f-SpeckerBug}
			Coloring scheme of the ``Specker bug'' gadget~\cite{kochen2,Greechie1974} from two-valued states.
			The set-theoretic representation is in terms of
			the canonical partition logic as an equipartitioning of the set $\{1,2,\ldots,14\}$
			obtained from all 14 two-valued states on this gadget.
		}
	\end{figure}
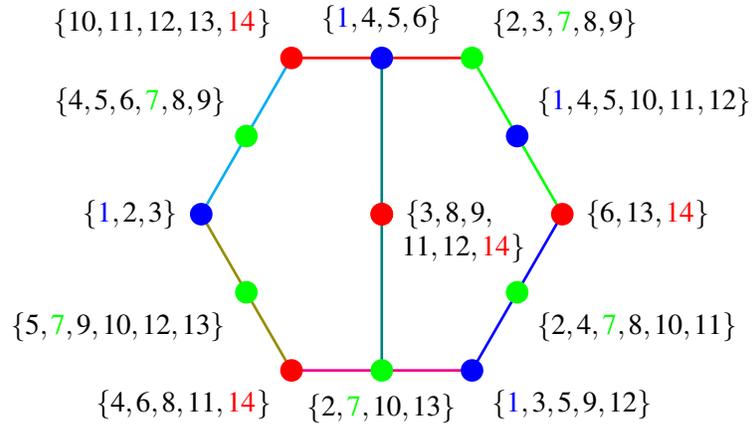
	
	\subsection{The underlying hypergraph of $B(G)$}\label{sec:underlying}
	
	All the two-valued states of the structure introduced in Figure~\ref{TIFS-non-Rec} have to assign the vertices $a$, $b$, $c$, $a'$, $b'$, $c'$ and $a''$, $b''$, $c''$ the same combination of two-valued states as for the hypergraph of Figure~\ref{Fig:underlying}.
	
	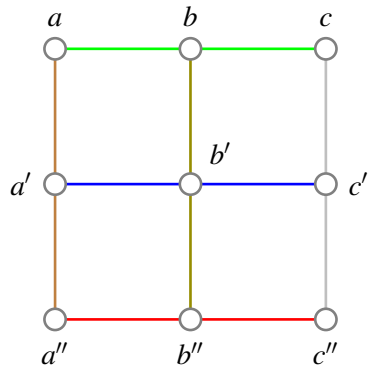
\begin{figure}
		\begin{center}
			\begin{tikzpicture}  [scale=0.9]
				
				\tikzstyle{every path}=[line width=1pt]
				
				\newdimen\ms
				\ms=0.1cm
				\tikzstyle{s1}=[color=red,rectangle,inner sep=3.5]
				\tikzstyle{c3}=[circle,inner sep={\ms/8},minimum size=4*\ms]
				\tikzstyle{c2}=[circle,inner sep={\ms/8},minimum size=3*\ms]
				\tikzstyle{c1}=[draw=gray,fill=white,circle,inner sep={\ms/1}]
				\tikzstyle{cs1}=[circle,inner sep={\ms/8},minimum size=1*\ms]

				
				\coordinate (a) at ( 0,4);
				\coordinate (b) at ( 2,4);
				\coordinate (c) at ( 4,4);

				\coordinate (a1) at ( 0,2);
				\coordinate (b1) at ( 2,2);
				\coordinate (c1) at ( 4,2);

				\coordinate (a2) at ( 0,0);
				\coordinate (b2) at ( 2,0);
				\coordinate (c2) at ( 4,0);

				
				\draw [color=green] (a) -- (c);
				\draw [color=blue] (a1) -- (c1);
				\draw [color=red] (a2) -- (c2);

				\draw [color=brown] (a) -- (a2);
				\draw [color=olive] (b) -- (b2);
				\draw [color=lightgray] (c) -- (c2);


				
				\draw (a)  coordinate[c1,label=above:{$a$}];
				\draw (b)  coordinate[c1,label=above:{$b$}];
				\draw (c)  coordinate[c1,label=above:{$c$}];
				\draw (a1) coordinate[c1,label=left:{$a'$}];
				\draw (b1) coordinate[c1,label=above right:{$b'$}];
				\draw (c1) coordinate[c1,label=right:{$c'$}];
				\draw (a2) coordinate[c1,label=below:{$a''$}];
				\draw (b2) coordinate[c1,label=below:{$b''$}];
				\draw (c2) coordinate[c1,label=below:{$c''$}];

			\end{tikzpicture}
		\end{center}
		\caption{\label{Fig:underlying}
			From the point of view of two-valued states, for which all the nine $G_i$'s in Figure~\ref{TIFS-non-Rec}, $1\le i\le 9$  are TIFS, the three ``new'' contexts
			$\{a, b, c\}$, $\{a', b', c'\}$ and $\{a'', b'',c''\}$ are formed
			through three triples of TIFS
			$\{G_1,G_2,G_3\}$, $\{G_4,G_5,G_6\}$, and $\{G_7,G_8,G_9\}$, respectively; thereby rendering a tightly bi-connected hypergraph underlying the one depicted in Figure~\ref{TIFS-non-Rec}. Note that the vertices of each row in the original graph of $B(G)$ do not lie on a context, but here in the underlying hypergraph they are.
		}
	\end{figure}
	
	The Travis matrix of this tightly bi-connected hypergraph is as follows (columns from left to right correspond to $a$, $b$, $c$, $a'$, $b'$, $c'$, $a''$, $b''$ and $c''$):
	\begin{equation}\label{underlying-Travice}
		T_{ij}=\begin{pmatrix}
			1 &  0 & 0 & 0 & 1 & 0 & 0 & 0 & 1    \\
			1 &  0 & 0 & 0 & 0 & 1 & 0 & 1 & 0    \\
			0 &  1 & 0 & 1 & 0 & 0 & 0 & 0 & 1    \\
			0 &  1 & 0 & 0 & 0 & 1 & 1 & 0 & 0    \\
			0 &  0 & 1 & 1 & 0 & 0 & 0 & 1 & 0    \\
			0 &  0 & 1 & 0 & 1 & 0 & 1 & 0 & 0    \\
		\end{pmatrix}
		.
	\end{equation}
	
	\subsection{``Tight GHZ'' logic}
	
	The hypergraph depicted in Figure~\ref{2020-f-GHZ}
	is a sublogic  of the observables in the Greenberger-Horn-Zeilinger setup~\cite{svozil-2020-ghz}.
	Its Travis matrix is
	\begin{equation}\label{GHZ-tight-Travice}
		T_{ij}=\begin{pmatrix}
			{\color{blue}1 } & {\color{blue}  0 } & {\color{blue} 0 } & {\color{blue} 0 } & {\color{blue} 0 } & {\color{blue} 0 } & {\color{blue} 1 } & {\color{blue} 0 } & {\color{blue} 0 } & {\color{blue} 0 } & {\color{blue} 0 } & {\color{blue} 1 } & {\color{blue} 0 } & {\color{blue} 1 } & {\color{blue} 0 } & {\color{blue} 0 }   \\
			1 &  0 & 0 & 0 & 0 & 0 & 0 & 1 & 0 & 1 & 0 & 0 & 0 & 0 & 1 & 0    \\
			0 &  1 & 0 & 0 & 0 & 0 & 1 & 0 & 1 & 0 & 0 & 0 & 0 & 0 & 0 & 1    \\
			{\color{green}0 } & {\color{green}  1 } & {\color{green} 0 } & {\color{green} 0 } & {\color{green} 0 } & {\color{green} 0 } & {\color{green} 0 } & {\color{green} 1 } & {\color{green} 0 } & {\color{green} 0 } & {\color{green} 1 } & {\color{green} 0 } & {\color{green} 1 } & {\color{green} 0 } & {\color{green} 0 } & {\color{green} 0 }   \\
			{\color{red}0 } & {\color{red}  0 } & {\color{red} 1 } & {\color{red} 0 } & {\color{red} 1 } & {\color{red} 0 } & {\color{red} 0 } & {\color{red} 0 } & {\color{red} 0 } & {\color{red} 1 } & {\color{red} 0 } & {\color{red} 0 } & {\color{red} 0 } & {\color{red} 0 } & {\color{red} 0 } & {\color{red} 1   } \\
			0 &  0 & 1 & 0 & 0 & 1 & 0 & 0 & 0 & 0 & 0 & 1 & 1 & 0 & 0 & 0    \\
			0 &  0 & 0 & 1 & 1 & 0 & 0 & 0 & 0 & 0 & 1 & 0 & 0 & 1 & 0 & 0    \\
			{\color{cyan}0 } & {\color{cyan}  0 } & {\color{cyan} 0 } & {\color{cyan} 1 } & {\color{cyan} 0 } & {\color{cyan} 1 } & {\color{cyan} 0 } & {\color{cyan} 0 } & {\color{cyan} 1 } & {\color{cyan} 0 } & {\color{cyan} 0 } & {\color{cyan} 0 } & {\color{cyan} 0 } & {\color{cyan} 0 } & {\color{cyan} 1 } & {\color{cyan} 0 }
		\end{pmatrix}
		.
	\end{equation}
	
	The coloring is depicted in Figure~\ref{2020-f-GHZ}.
	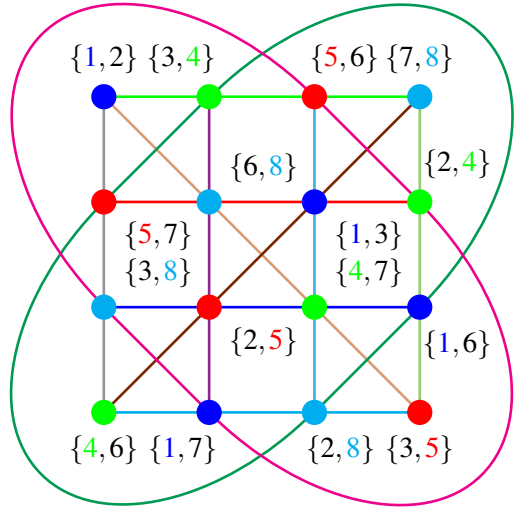
\begin{figure}
		\begin{center}
			\begin{tikzpicture}  [scale=0.7]
				
				\tikzstyle{every path}=[line width=1pt]
				
				\newdimen\ms
				\ms=0.1cm
				\tikzstyle{s1}=[color=red,rectangle,inner sep=3.5]
				\tikzstyle{c3}=[circle,inner sep={\ms/8},minimum size=4*\ms]
				\tikzstyle{c2}=[circle,inner sep={\ms/8},minimum size=3*\ms]
				\tikzstyle{c1}=[circle,inner sep={\ms/8},minimum size=2*\ms]
				\tikzstyle{cs1}=[circle,inner sep={\ms/8},minimum size=1*\ms]

				
				\coordinate (uuu) at ( 0,6);
				\coordinate (uuv) at ( 2,6);
				\coordinate (uvu) at ( 4,6);
				\coordinate (uvv) at ( 6,6);
				\coordinate (vuu) at ( 8,6);
				\coordinate (vuv) at (10,6);
				\coordinate (vvu) at (12,6);
				\coordinate (vvv) at (14,6);

				\coordinate (ucc) at ( 0,4);
				\coordinate (vcc) at ( 2,4);
				\coordinate (ucd) at ( 4,4);
				\coordinate (vcd) at ( 6,4);
				\coordinate (udc) at ( 8,4);
				\coordinate (vdc) at (10,4);
				\coordinate (udd) at (12,4);
				\coordinate (vdd) at (14,4);
				
				\coordinate (cuc) at ( 0,2);
				\coordinate (cvc) at ( 2,2);
				\coordinate (cud) at ( 4,2);
				\coordinate (cvd) at ( 6,2);
				\coordinate (duc) at ( 8,2);
				\coordinate (dvc) at (10,2);
				\coordinate (dud) at (12,2);
				\coordinate (dvd) at (14,2);
				
				\coordinate (ccu) at ( 0,0);
				\coordinate (ccv) at ( 2,0);
				\coordinate (cdu) at ( 4,0);
				\coordinate (cdv) at ( 6,0);
				\coordinate (dcu) at ( 8,0);
				\coordinate (dcv) at (10,0);
				\coordinate (ddu) at (12,0);
				\coordinate (ddv) at (14,0);

				
				\draw [color=cyan] (ccu) -- (cdv);
				\draw [color=blue] (cuc) -- (cvd);
				\draw [color=red] (ucc) -- (vcd);
				\draw [color=green] (uuu) -- (uvv);
				
				\draw [color=Gray] (uuu) -- (ccu);
				\draw [color=Plum] (uuv) -- (ccv);
				\draw [color=CornflowerBlue] (uvu) -- (cdu);
				\draw [color=YellowGreen] (uvv) -- (cdv);

				\draw [color=Tan] (uuu) -- (cdv);
				\draw [color=Brown] (uvv) -- (ccu);

				\draw [color=ForestGreen] (uuv) -- (ucc);
				\draw [color=ForestGreen](cdu) -- (cvd);
				\draw [rotate=225,color=ForestGreen] (cvd) arc (90:270:4.5 and 2.82);
				\draw[rotate=45,color=ForestGreen] (ucc) arc (90:270:4.5 and 2.82);
				
				\draw [color=Magenta] (cuc) -- (ccv);
				\draw [color=Magenta] (uvu) -- (vcd);
				\draw[rotate=315,color=Magenta] (uvu) arc (90:270:4.5 and 2.82);
				\draw[rotate=135,color=Magenta] (ccv) arc (90:270:4.5 and 2.82);


				\draw (uuu) coordinate[c2,fill=blue,draw=blue,label=above:{$\{{\color{blue}1 },2\}$}];
				\draw (uuv) coordinate[c2,fill=green,draw=green,label={[xshift=-3.7mm]90:$\{3,{\color{green}4}\}$}];
				\draw (uvu) coordinate[c2,fill=red,draw=red,label={[xshift=+4mm]90:$\{{\color{red}5},6\}$}];
				\draw (uvv) coordinate[c2,fill=cyan,draw=cyan,label=above:{$\{7,{\color{cyan}8}\}$}];

				\draw (ucc) coordinate[c2,fill=red,draw=red,label=below right:{$\{{\color{red}5},7\}$}];
				\draw (vcc) coordinate[c2,fill=cyan,draw=cyan,label=above right:{$\{6,{\color{cyan}8}\}$}];
				\draw (ucd) coordinate[c2,fill=blue,draw=blue,label=below right:{$\{{\color{blue}1 },3\}$}];
				\draw (vcd) coordinate[c2,fill=green,draw=green,label={[xshift=+5mm,distance=10mm]90:$\{2,{\color{green}4} \}$}];
				
				\draw (cuc) coordinate[c2,fill=cyan,draw=cyan,label=above right:{$\{3,{\color{cyan}8}\}$}];
				\draw (cvc) coordinate[c2,fill=red,draw=red,label=below right:{$\{2,{\color{red}5}\}$}];
				\draw (cud) coordinate[c2,fill=green,draw=green,label=above right:{$\{{\color{green}4},7\}$}];
				\draw (cvd) coordinate[c2,fill=blue,draw=blue,label={[xshift=+5mm,distance=10mm]270:$\{{\color{blue}1 },6\}$}];
				
				\draw (ccu) coordinate[c2,fill=green,draw=green,label=below:{$\{{\color{green}4},6\}$}];
				\draw (ccv) coordinate[c2,fill=blue,draw=blue,label={[xshift=-3.5mm]270:$\{{\color{blue}1 },7\}$}];
				\draw (cdu) coordinate[c2,fill=cyan,draw=cyan,label={[xshift=+3.5mm]270:$\{2,{\color{cyan}8}\}$}];
				\draw (cdv) coordinate[c2,fill=red,draw=red,label=below:{$\{3,{\color{red}5}\}$}];

			\end{tikzpicture}
		\end{center}
		\caption{\label{2020-f-GHZ}
			Coloring scheme of the ``tight GHZ'' logic~\cite{svozil-2020-ghz} from two-valued states.
			The set-theoretic representation is in terms of
			the canonical partition logic as an equipartitioning of the set $\{1,2,\ldots,8\}$
			obtained from all eight two-valued states on this gadget.
		}
	\end{figure}
	
	\section{A counterexample: Greechie's $G_{32}$}\label{2021-chroma-G32}
	
	It is quite straightforward to demonstrate that the logic $G_{32}$ introduced by Greechie~\cite[Figure~6, p.~121]{greechie:71}
	(see also Refs.~\cite{Holland1975,Bennett-MC-1970,Greechie1974,Greechie-Suppes1976})
	whose hypergraph is depicted in Figure~\ref{2020-f-GreechieG32}(a) has a chromatic number larger than three;
	and, in particular, while having a separating and a unital set of two-valued states, cannot be colored by two-valued states in the algorithmic way proposed earlier.
	Consider the set of all six two-valued states which can be tabulated by the Travis matrix
	\begin{equation}
		T_{ij}=\begin{pmatrix}
			1&0&0&1&0&1&0&0&1&0&0&0&0&0&1\\
			1&0&0&0&1&0&0&1&0&1&0&0&0&1&0\\
			0&1&0&1&0&0&1&0&0&0&1&0&0&1&0\\
			0&1&0&0&1&0&0&0&1&0&0&1&1&0&0\\
			0&0&1&0&0&1&0&1&0&0&1&0&1&0&0\\
			0&0&1&0&0&0&1&0&0&1&0&1&0&0&1
		\end{pmatrix}
		.
	\end{equation}
	There is no way how three of these six row vectors add up to
	a vector whose components are all one; that is,
	$
	\begin{pmatrix}
		1,1,1,1,1,1,1,1,1,1,1,1,1,1,1
	\end{pmatrix}
	$.
	``Completing'' the partition logic and ``extending''
	$G_{32}$ by adding five more contexts
	$\{\{1, 2\}, \{3, 6\}, \{4, 5\}\}$,
	$\{\{1, 4\}, \{2, 3\}, \{5, 6\}\}$,
	$\{\{1, 3\}, \{2, 5\}, \{4, 6\}\}$,
	$\{\{1, 5\}, \{2, 6\}, \{3, 4\}\}$, and
	$\{\{1, 6\}, \{2, 4\}, \{3, 5\}\}$
	does not change the set of two-valued states and thus the Travis matrix.

	Another way of seeing this is to associate a color to, say, the first state.
	As a consequence, all other states, namely states number
	$2$,
	$3$,
	$4$,
	$5$, and
	$6$, need to be eliminated,
	leaving no state which can be associated with
	another color.
	
	One possibility for finding a proper coloring is to drop ``exclusivity'', or rather, the unique association of two-valued states with colors; but not entirely. This can be achieved by not eliminating two-valued states if they appear in association with previous colors. A construction identifying state numbers 1 with red, 3 with green, 5 with blue, and then 2 or four with cyan is depicted in Figure~\ref{2020-f-GreechieG32}(b).
	
	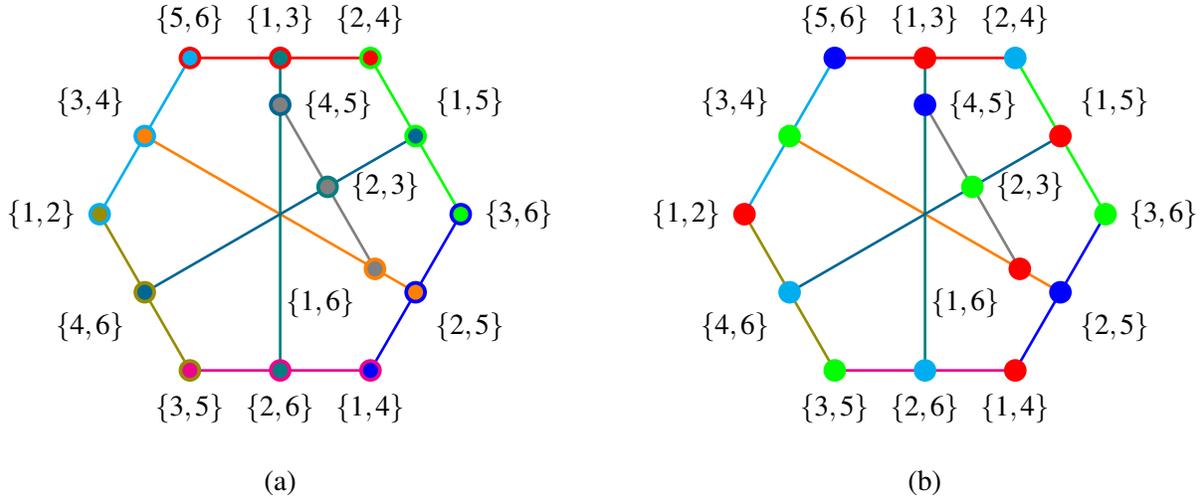
\begin{figure}
		\begin{center}
			\begin{tabular}{ c c c }
				\begin{tikzpicture}  [scale=0.8]
					
					\newdimen\ms
					\ms=0.05cm
					
					\tikzstyle{every path}=[line width=1pt]
					
					\tikzstyle{c3}=[circle,inner sep={\ms/8},minimum size=6*\ms]
					\tikzstyle{c2}=[circle,inner sep={\ms/8},minimum size=4*\ms]
					\tikzstyle{c1}=[circle,inner sep={\ms/8},minimum size=0.8*\ms]
					
					\newdimen\R
					\R=30mm     
					
					
					
					\path
					({ 180 - 0 * 360 /6}:\R      ) coordinate(1)
					({ 180 - 30 - 0 * 360 /6}:{\R * sqrt(3)/2}      ) coordinate(2)
					({ 180 - 1 * 360 /6}:\R   ) coordinate(3)
					({ 180 - 30 - 1 * 360 /6}:{\R * sqrt(3)/2}   ) coordinate(4)
					({ 180 - 2 * 360 /6}:\R  ) coordinate(5)
					({ 180 - 30 - 2 * 360 /6}:{\R * sqrt(3)/2}  ) coordinate(6)
					({ 180 - 3 * 360 /6}:\R  ) coordinate(7)
					({ 180 - 30 - 3 * 360 /6}:{\R * sqrt(3)/2}  ) coordinate(8)
					({ 180 - 4 * 360 /6}:\R     ) coordinate(9)
					({ 180 - 30 - 4 * 360 /6}:{\R * sqrt(3)/2}     ) coordinate(10)
					({ 180 - 5 * 360 /6}:\R     ) coordinate(11)
					({ 180 - 30 - 5 * 360 /6}:{\R * sqrt(3)/2}     ) coordinate(12)
					;
					
					
					\draw [color=cyan] (1) -- (2) -- (3);
					\draw [color=red] (3) -- (4) -- (5);
					\draw [color=green] (5) -- (6) -- (7);
					\draw [color=blue] (7) -- (8) -- (9);
					\draw [color=magenta] (9) -- (10) -- (11);    %
					\draw [color=olive] (11) -- (12) -- (1);    %
					\draw [color=orange] (2) -- (8)  coordinate[pos=0.85]  (15);
					\draw [color=teal] (4) -- (10)  coordinate[pos=0.15]  (13);
					\draw [color=MidnightBlue] (6) -- (12)  coordinate[pos=0.325]  (14);
					\draw [color=gray] (13) --(15);
					
					%
					%
					\draw (1) coordinate[c3,fill=cyan,label={left: $\{ 1,2\} $}];   %
					\draw (1) coordinate[c2,fill=olive];  %
					\draw (2) coordinate[c3,fill=cyan,label={above left: $\{ 3,4\}$}];    %
					\draw (2) coordinate[c2,fill=orange];    %
					\draw (3) coordinate[c3,fill=red,label={above: $\{ 5,6\} $}]; %
					\draw (3) coordinate[c2,fill=cyan];  %
					\draw (4) coordinate[c3,fill=red,label={above: $\{ 1,3\}$}];  %
					\draw (4) coordinate[c2,fill=teal];  %
					\draw (5) coordinate[c3,fill=green,label={above: $\{ 2,4\} $}];  %
					\draw (5) coordinate[c2,fill=red];  %
					\draw (6) coordinate[c3,fill=green,label={above right: $\{ 1,5\} $}];
					\draw (6) coordinate[c2,fill=MidnightBlue];
					\draw (7) coordinate[c3,fill=blue,label={right: $\{ 3,6\}$}];  %
					\draw (7) coordinate[c2,fill=green];  %
					\draw (8) coordinate[c3,fill=blue,label={below right: $\{ 2,5\}$}];  %
					\draw (8) coordinate[c2,fill=orange];  %
					\draw (9) coordinate[c3,fill=magenta,label={below: $\{ 1,4\}$}];
					\draw (9) coordinate[c2,fill=blue];  %
					\draw (10) coordinate[c3,fill=magenta,label={below: $\{ 2,6\}$}];  %
					\draw (10) coordinate[c2,fill=teal];  %
					\draw (11) coordinate[c3,fill=olive,label={below: $\{ 3,5\}$}];  %
					\draw (11) coordinate[c2,fill=magenta];  %
					\draw (12) coordinate[c3,fill=olive,label={below left: $\{ 4,6\}$}];
					\draw (12) coordinate[c2,fill=MidnightBlue];
					\draw (13) coordinate[c3,fill=MidnightBlue,label={right: $\{ 4,5\}$}];  %
					\draw (13) coordinate[c2,fill=gray];  %
					\draw (14) coordinate[c3,fill=teal,label=0:{$\{ 2,3\}$}];  %
					\draw (14) coordinate[c2,fill=gray];  %
					\draw (15) coordinate[c3,fill=orange,label={below left: $\{1,6\}$}];  %
					\draw (15) coordinate[c2,fill=gray];  %
				\end{tikzpicture}
				&$\qquad$&
				\begin{tikzpicture}  [scale=0.8]
					
					\newdimen\ms
					\ms=0.05cm
					
					\tikzstyle{every path}=[line width=1pt]
					
					\tikzstyle{c3}=[circle,inner sep={\ms/8},minimum size=6*\ms]
					\tikzstyle{c2}=[circle,inner sep={\ms/8},minimum size=4*\ms]
					\tikzstyle{c1}=[circle,inner sep={\ms/8},minimum size=0.8*\ms]
					
					\newdimen\R
					\R=30mm     
					
					
					
					\path
					({ 180 - 0 * 360 /6}:\R      ) coordinate(1)
					({ 180 - 30 - 0 * 360 /6}:{\R * sqrt(3)/2}      ) coordinate(2)
					({ 180 - 1 * 360 /6}:\R   ) coordinate(3)
					({ 180 - 30 - 1 * 360 /6}:{\R * sqrt(3)/2}   ) coordinate(4)
					({ 180 - 2 * 360 /6}:\R  ) coordinate(5)
					({ 180 - 30 - 2 * 360 /6}:{\R * sqrt(3)/2}  ) coordinate(6)
					({ 180 - 3 * 360 /6}:\R  ) coordinate(7)
					({ 180 - 30 - 3 * 360 /6}:{\R * sqrt(3)/2}  ) coordinate(8)
					({ 180 - 4 * 360 /6}:\R     ) coordinate(9)
					({ 180 - 30 - 4 * 360 /6}:{\R * sqrt(3)/2}     ) coordinate(10)
					({ 180 - 5 * 360 /6}:\R     ) coordinate(11)
					({ 180 - 30 - 5 * 360 /6}:{\R * sqrt(3)/2}     ) coordinate(12)
					;
					
					
					\draw [color=cyan] (1) -- (2) -- (3);
					\draw [color=red] (3) -- (4) -- (5);
					\draw [color=green] (5) -- (6) -- (7);
					\draw [color=blue] (7) -- (8) -- (9);
					\draw [color=magenta] (9) -- (10) -- (11);    %
					\draw [color=olive] (11) -- (12) -- (1);    %
					\draw [color=orange] (2) -- (8)  coordinate[pos=0.85]  (15);
					\draw [color=teal] (4) -- (10)  coordinate[pos=0.15]  (13);
					\draw [color=MidnightBlue] (6) -- (12)  coordinate[pos=0.325]  (14);
					\draw [color=gray] (13) --(15);
					
					%
					%
					\draw (1) coordinate[c3,fill=red,label={left: $\{ 1,2\} $}];   %
					\draw (2) coordinate[c3,fill=green,label={above left: $\{ 3,4\}$}];    %
					\draw (3) coordinate[c3,fill=blue,label={above: $\{ 5,6\} $}]; %
					\draw (4) coordinate[c3,fill=red,label={above: $\{ 1,3\}$}];  %
					\draw (5) coordinate[c3,fill=cyan,label={above: $\{ 2,4\} $}];  %
					\draw (6) coordinate[c3,fill=red,label={above right: $\{ 1,5\} $}];
					\draw (7) coordinate[c3,fill=green,label={right: $\{ 3,6\}$}];  %
					\draw (8) coordinate[c3,fill=blue,label={below right: $\{ 2,5\}$}];  %
					\draw (9) coordinate[c3,fill=red,label={below: $\{ 1,4\}$}];
					\draw (10) coordinate[c3,fill=cyan,label={below: $\{ 2,6\}$}];  %
					\draw (11) coordinate[c3,fill=green,label={below: $\{ 3,5\}$}];  %
					\draw (12) coordinate[c3,fill=cyan,label={below left: $\{ 4,6\}$}];
					\draw (13) coordinate[c3,fill=blue,label={right: $\{ 4,5\}$}];  %
					\draw (14) coordinate[c3,fill=green,label=0:{$\{ 2,3\}$}];  %
					\draw (15) coordinate[c3,fill=red,label={below left: $\{1,6\}$}];  %
				\end{tikzpicture}\\
				(a)&&(b)
			\end{tabular}
		\end{center}
		\caption{\label{2020-f-GreechieG32}
			(a) Greechie diagram of $G_{32}$ introduced by Greechie~\cite[Figure~6, p.~121]{greechie:71}.
			The overlaid set theoretic representation is in terms of
			the canonical partition logic as an equipartitioning of the set $\{1,2,3,4,5,6\}$
			obtained from all 6 two-valued states on $G_{32}$;
			(b) a nonunique coloring by four colors.
		}
	\end{figure}
	
	Because of the following proof by contradiction, $G_{32}$  cannot have a faithful orthogonal representation:
	Suppose $G_{32}$  has a faithful orthogonal representation.
	Then each one of the nine biconnected contexts of $G_{32}$ can be  uniformly represented by a
	maximal operator~\cite[\S~84, p.~171,172]{halmos-vs}. In order for a faithful orthogonal
	representation to exist the spectral decomposition of two ``intertwining'' maximal
	operators must have (i) (at least) one common projector (ii) with identical eigenvalues which can be identified with identical colors.
	If the logic can be consistently ``covered'' or colored by three colors then the eigenvalues associated with the maximal operators can be the same--that is, these three values (or colors) would occur uniformly in all nine contexts.
	However, this is not the case for $G_{32}$. Therefore, a uniform representation cannot be be given in terms of nine maximal operators with just three eigenvalues per context (that is, maximal operator).
	This is a form of nonclassicality based on a chromatic number exceeding the 2-section's clique number.
	In this case Brooks' theorem~\cite{Brooks1941,Lovasz1975} yields an upper bound of 4 for the chromatic number of $G_{32}$.

	\begin{acknowledgments}
		The authors would like to thank the anonymous referees for their  invaluable comments on the earlier version of this paper.
		
	M.H.S. acknowledges the Institute for Theoretical Physics, Vienna University of Technology for the hospitality during his visit in September 2019, without which writing this paper would not have been possible.
		
		K.S. was supported, in whole, or in part, by the Austrian Science Fund (FWF), Project No. I 4579-N. For the purpose of open access, the author has applied a CC BY public copyright licence to any Author Accepted Manuscript version arising from this submission.

The authors acknowledge TU Wien Bibliothek for financial support through its Open Access Funding Programme.

		The authors kindly acknowledge enlightening discussions with Adan Cabello Jos\'{e}, R. Portillo,
		Alexander~Svozil, Josef Tkadlec and Sebastian Matkovich.
		The authors are grateful to Josef Tkadlec for providing a {\em Pascal} program
		which computes and analyses the set of two-valued states of collections of contexts.
		All misconceptions and errors are of the authors.
	\end{acknowledgments}
	
	\section*{Author Declarations}
	The authors have no conflicts to disclose.
	
	\section*{Data Availability}
	Data sharing is not applicable to this article as no new data were created or analyzed in this study.


\begin{thebibliography}{39}%
\makeatletter
\providecommand \@ifxundefined [1]{%
 \@ifx{#1\undefined}
}%
\providecommand \@ifnum [1]{%
 \ifnum #1\expandafter \@firstoftwo
 \else \expandafter \@secondoftwo
 \fi
}%
\providecommand \@ifx [1]{%
 \ifx #1\expandafter \@firstoftwo
 \else \expandafter \@secondoftwo
 \fi
}%
\providecommand \natexlab [1]{#1}%
\providecommand \enquote  [1]{``#1''}%
\providecommand \bibnamefont  [1]{#1}%
\providecommand \bibfnamefont [1]{#1}%
\providecommand \citenamefont [1]{#1}%
\providecommand \href@noop [0]{\@secondoftwo}%
\providecommand \href [0]{\begingroup \@sanitize@url \@href}%
\providecommand \@href[1]{\@@startlink{#1}\@@href}%
\providecommand \@@href[1]{\endgroup#1\@@endlink}%
\providecommand \@sanitize@url [0]{\catcode `\\12\catcode `\$12\catcode
  `\&12\catcode `\#12\catcode `\^12\catcode `\_12\catcode `\%12\relax}%
\providecommand \@@startlink[1]{}%
\providecommand \@@endlink[0]{}%
\providecommand \url  [0]{\begingroup\@sanitize@url \@url }%
\providecommand \@url [1]{\endgroup\@href {#1}{\urlprefix }}%
\providecommand \urlprefix  [0]{URL }%
\providecommand \Eprint [0]{\href }%
\providecommand \doibase [0]{http://dx.doi.org/}%
\providecommand \selectlanguage [0]{\@gobble}%
\providecommand \bibinfo  [0]{\@secondoftwo}%
\providecommand \bibfield  [0]{\@secondoftwo}%
\providecommand \translation [1]{[#1]}%
\providecommand \BibitemOpen [0]{}%
\providecommand \bibitemStop [0]{}%
\providecommand \bibitemNoStop [0]{.\EOS\space}%
\providecommand \EOS [0]{\spacefactor3000\relax}%
\providecommand \BibitemShut  [1]{\csname bibitem#1\endcsname}%
\let\auto@bib@innerbib\@empty
\bibitem [{\citenamefont {Kochen}\ and\ \citenamefont
  {Specker}(1967)}]{kochen1}%
  \BibitemOpen
  \bibfield  {author} {\bibinfo {author} {\bibfnamefont {Simon}\ \bibnamefont
  {Kochen}}\ and\ \bibinfo {author} {\bibfnamefont {Ernst~P.}\ \bibnamefont
  {Specker}},\ }\bibfield  {title} {\enquote {\bibinfo {title} {The problem of
  hidden variables in quantum mechanics},}\ }\href {\doibase
  10.1512/iumj.1968.17.17004} {\bibfield  {journal} {\bibinfo  {journal}
  {Journal of Mathematics and Mechanics (now Indiana University Mathematics
  Journal)}\ }\textbf {\bibinfo {volume} {17}},\ \bibinfo {pages} {59--87}
  (\bibinfo {year} {1967})}\BibitemShut {NoStop}%
\bibitem [{\citenamefont {Birkhoff}\ and\ \citenamefont {{von
  Neumann}}(1936)}]{birkhoff-36}%
  \BibitemOpen
  \bibfield  {author} {\bibinfo {author} {\bibfnamefont {Garrett}\ \bibnamefont
  {Birkhoff}}\ and\ \bibinfo {author} {\bibfnamefont {John}\ \bibnamefont {{von
  Neumann}}},\ }\bibfield  {title} {\enquote {\bibinfo {title} {The logic of
  quantum mechanics},}\ }\href {\doibase 10.2307/1968621} {\bibfield  {journal}
  {\bibinfo  {journal} {Annals of Mathematics}\ }\textbf {\bibinfo {volume}
  {37}},\ \bibinfo {pages} {823--843} (\bibinfo {year} {1936})}\BibitemShut
  {NoStop}%
\bibitem [{\citenamefont {Meyer}(1999)}]{meyer:99}%
  \BibitemOpen
  \bibfield  {author} {\bibinfo {author} {\bibfnamefont {David~A.}\
  \bibnamefont {Meyer}},\ }\bibfield  {title} {\enquote {\bibinfo {title}
  {Finite precision measurement nullifies the {K}ochen-{S}pecker theorem},}\
  }\href {\doibase 10.1103/PhysRevLett.83.3751} {\bibfield  {journal} {\bibinfo
   {journal} {Physical Review Letters}\ }\textbf {\bibinfo {volume} {83}},\
  \bibinfo {pages} {3751--3754} (\bibinfo {year} {1999})},\ \Eprint
  {http://arxiv.org/abs/arXiv:quant-ph/9905080} {arXiv:quant-ph/9905080}
  \BibitemShut {NoStop}%
\bibitem [{\citenamefont {Havlicek}\ \emph {et~al.}(2001)\citenamefont
  {Havlicek}, \citenamefont {Krenn}, \citenamefont {Summhammer},\ and\
  \citenamefont {Svozil}}]{havlicek-2000}%
  \BibitemOpen
  \bibfield  {author} {\bibinfo {author} {\bibfnamefont {Hans}\ \bibnamefont
  {Havlicek}}, \bibinfo {author} {\bibfnamefont {G{\"{u}}nther}\ \bibnamefont
  {Krenn}}, \bibinfo {author} {\bibfnamefont {Johann}\ \bibnamefont
  {Summhammer}}, \ and\ \bibinfo {author} {\bibfnamefont {Karl}\ \bibnamefont
  {Svozil}},\ }\bibfield  {title} {\enquote {\bibinfo {title} {Colouring the
  rational quantum sphere and the {K}ochen-{S}pecker theorem},}\ }\href
  {\doibase 10.1088/0305-4470/34/14/312} {\bibfield  {journal} {\bibinfo
  {journal} {Journal of Physics A: Mathematical and General}\ }\textbf
  {\bibinfo {volume} {34}},\ \bibinfo {pages} {3071--3077} (\bibinfo {year}
  {2001})},\ \Eprint {http://arxiv.org/abs/arXiv:quant-ph/9911040}
  {arXiv:quant-ph/9911040} \BibitemShut {NoStop}%
\bibitem [{\citenamefont {Godsil}\ and\ \citenamefont {Zaks}(1988,
  2012)}]{godsil-zaks}%
  \BibitemOpen
  \bibfield  {author} {\bibinfo {author} {\bibfnamefont {Chris~D.}\
  \bibnamefont {Godsil}}\ and\ \bibinfo {author} {\bibfnamefont
  {J.}~\bibnamefont {Zaks}},\ }\href {https://arxiv.org/abs/1201.0486}
  {\enquote {\bibinfo {title} {Coloring the sphere},}\ } (\bibinfo {year}
  {1988, 2012}),\ \bibinfo {note} {{U}niversity of {W}aterloo research report
  CORR 88-12},\ \Eprint {http://arxiv.org/abs/arXiv:1201.0486}
  {arXiv:1201.0486} \BibitemShut {NoStop}%
\bibitem [{\citenamefont {Halmos}(1958)}]{halmos-vs}%
  \BibitemOpen
  \bibfield  {author} {\bibinfo {author} {\bibfnamefont {Paul~Richard}\
  \bibnamefont {Halmos}},\ }\href {\doibase 10.1007/978-1-4612-6387-6} {\emph
  {\bibinfo {title} {Finite-Dimensional Vector Spaces}}},\ Undergraduate Texts
  in Mathematics\ (\bibinfo  {publisher} {Springer},\ \bibinfo {address} {New
  York},\ \bibinfo {year} {1958})\BibitemShut {NoStop}%
\bibitem [{\citenamefont {Specker}(1960)}]{specker-60}%
  \BibitemOpen
  \bibfield  {author} {\bibinfo {author} {\bibfnamefont {Ernst}\ \bibnamefont
  {Specker}},\ }\bibfield  {title} {\enquote {\bibinfo {title} {{D}ie {L}ogik
  nicht gleichzeitig entscheidbarer {A}ussagen},}\ }\href {\doibase
  10.1111/j.1746-8361.1960.tb00422.x} {\bibfield  {journal} {\bibinfo
  {journal} {Dialectica}\ }\textbf {\bibinfo {volume} {14}},\ \bibinfo {pages}
  {239--246} (\bibinfo {year} {1960})},\ \bibinfo {note} {english translation
  at {https://arxiv.org/abs/1103.4537}},\ \Eprint
  {http://arxiv.org/abs/arXiv:1103.4537} {arXiv:1103.4537} \BibitemShut
  {NoStop}%
\bibitem [{\citenamefont {Greechie}(1971)}]{greechie:71}%
  \BibitemOpen
  \bibfield  {author} {\bibinfo {author} {\bibfnamefont {Richard~Joseph}\
  \bibnamefont {Greechie}},\ }\bibfield  {title} {\enquote {\bibinfo {title}
  {Orthomodular lattices admitting no states},}\ }\href {\doibase
  10.1016/0097-3165(71)90015-X} {\bibfield  {journal} {\bibinfo  {journal}
  {Journal of Combinatorial Theory. {S}eries {A}}\ }\textbf {\bibinfo {volume}
  {10}},\ \bibinfo {pages} {119--132} (\bibinfo {year} {1971})}\BibitemShut
  {NoStop}%
\bibitem [{\citenamefont {Kalmbach}(1983)}]{kalmbach-83}%
  \BibitemOpen
  \bibfield  {author} {\bibinfo {author} {\bibfnamefont {Gudrun}\ \bibnamefont
  {Kalmbach}},\ }\href@noop {} {\emph {\bibinfo {title} {Orthomodular
  Lattices}}},\ \bibinfo {series} {London Mathematical Society Monographs},
  Vol.~\bibinfo {volume} {18}\ (\bibinfo  {publisher} {Academic Press},\
  \bibinfo {address} {London and New York},\ \bibinfo {year}
  {1983})\BibitemShut {NoStop}%
\bibitem [{\citenamefont {Greechie}(1968)}]{Greechie1968}%
  \BibitemOpen
  \bibfield  {author} {\bibinfo {author} {\bibfnamefont {Richard~Joseph}\
  \bibnamefont {Greechie}},\ }\bibfield  {title} {\enquote {\bibinfo {title}
  {On the structure of orthomodular lattices satisfying the chain condition},}\
  }\href {\doibase 10.1016/s0021-9800(68)80002-x} {\bibfield  {journal}
  {\bibinfo  {journal} {Journal of Combinatorial Theory}\ }\textbf {\bibinfo
  {volume} {4}},\ \bibinfo {pages} {210--218} (\bibinfo {year}
  {1968})}\BibitemShut {NoStop}%
\bibitem [{\citenamefont {Svozil}\ and\ \citenamefont
  {Tkadlec}(1996)}]{svozil-tkadlec}%
  \BibitemOpen
  \bibfield  {author} {\bibinfo {author} {\bibfnamefont {Karl}\ \bibnamefont
  {Svozil}}\ and\ \bibinfo {author} {\bibfnamefont {Josef}\ \bibnamefont
  {Tkadlec}},\ }\bibfield  {title} {\enquote {\bibinfo {title} {Greechie
  diagrams, nonexistence of measures in quantum logics and {K}ochen--{S}pecker
  type constructions},}\ }\href {\doibase 10.1063/1.531710} {\bibfield
  {journal} {\bibinfo  {journal} {Journal of Mathematical Physics}\ }\textbf
  {\bibinfo {volume} {37}},\ \bibinfo {pages} {5380--5401} (\bibinfo {year}
  {1996})}\BibitemShut {NoStop}%
\bibitem [{\citenamefont {Mc{K}ay}\ \emph {et~al.}(2000)\citenamefont
  {Mc{K}ay}, \citenamefont {Megill},\ and\ \citenamefont
  {Pavi{\v{c}}i{\'{c}}}}]{Mckay2000}%
  \BibitemOpen
  \bibfield  {author} {\bibinfo {author} {\bibfnamefont {Brendan~D.}\
  \bibnamefont {Mc{K}ay}}, \bibinfo {author} {\bibfnamefont {Norman~D.}\
  \bibnamefont {Megill}}, \ and\ \bibinfo {author} {\bibfnamefont {Mladen}\
  \bibnamefont {Pavi{\v{c}}i{\'{c}}}},\ }\bibfield  {title} {\enquote {\bibinfo
  {title} {Algorithms for {G}reechie diagrams},}\ }\href {\doibase
  10.1023/a:1026476701774} {\bibfield  {journal} {\bibinfo  {journal}
  {International Journal of Theoretical Physics}\ }\textbf {\bibinfo {volume}
  {39}},\ \bibinfo {pages} {2381--2406} (\bibinfo {year} {2000})},\ \Eprint
  {http://arxiv.org/abs/quant-ph/0009039} {quant-ph/0009039} \BibitemShut
  {NoStop}%
\bibitem [{\citenamefont {Pavi{\v{c}}i{\'{c}}}\ \emph
  {et~al.}(2005)\citenamefont {Pavi{\v{c}}i{\'{c}}}, \citenamefont {Merlet},
  \citenamefont {Mc{K}ay},\ and\ \citenamefont {Megill}}]{Pavicic-2005}%
  \BibitemOpen
  \bibfield  {author} {\bibinfo {author} {\bibfnamefont {Mladen}\ \bibnamefont
  {Pavi{\v{c}}i{\'{c}}}}, \bibinfo {author} {\bibfnamefont {Jean-Pierre}\
  \bibnamefont {Merlet}}, \bibinfo {author} {\bibfnamefont {Brendan}\
  \bibnamefont {Mc{K}ay}}, \ and\ \bibinfo {author} {\bibfnamefont {Norman~D}\
  \bibnamefont {Megill}},\ }\bibfield  {title} {\enquote {\bibinfo {title}
  {{K}ochen-{S}pecker vectors},}\ }\href {\doibase 10.1088/0305-4470/38/7/013}
  {\bibfield  {journal} {\bibinfo  {journal} {Journal of Physics A:
  Mathematical and General}\ }\textbf {\bibinfo {volume} {38}},\ \bibinfo
  {pages} {1577--1592} (\bibinfo {year} {2005})},\ \Eprint
  {http://arxiv.org/abs/arXiv:quant-ph/0409014} {arXiv:quant-ph/0409014}
  \BibitemShut {NoStop}%
\bibitem [{\citenamefont {Bretto}(2013)}]{Bretto-MR3077516}%
  \BibitemOpen
  \bibfield  {author} {\bibinfo {author} {\bibfnamefont {Alain}\ \bibnamefont
  {Bretto}},\ }\href {\doibase 10.1007/978-3-319-00080-0} {\emph {\bibinfo
  {title} {Hypergraph theory}}},\ Mathematical Engineering\ (\bibinfo
  {publisher} {Springer},\ \bibinfo {address} {Cham, Heidelberg, New York,
  Dordrecht, London},\ \bibinfo {year} {2013})\ pp.\ \bibinfo {pages}
  {xiv+119}\BibitemShut {NoStop}%
\bibitem [{\citenamefont {Cabello}\ \emph {et~al.}(2018)\citenamefont
  {Cabello}, \citenamefont {Portillo}, \citenamefont {Sol\'{i}s},\ and\
  \citenamefont {Svozil}}]{2018-minimalYIYS}%
  \BibitemOpen
  \bibfield  {author} {\bibinfo {author} {\bibfnamefont {Ad\'an}\ \bibnamefont
  {Cabello}}, \bibinfo {author} {\bibfnamefont {Jos\'{e}~R.}\ \bibnamefont
  {Portillo}}, \bibinfo {author} {\bibfnamefont {Alberto}\ \bibnamefont
  {Sol\'{i}s}}, \ and\ \bibinfo {author} {\bibfnamefont {Karl}\ \bibnamefont
  {Svozil}},\ }\bibfield  {title} {\enquote {\bibinfo {title} {Minimal
  true-implies-false and true-implies-true sets of propositions in
  noncontextual hidden-variable theories},}\ }\href {\doibase
  10.1103/PhysRevA.98.012106} {\bibfield  {journal} {\bibinfo  {journal}
  {Physical Review A}\ }\textbf {\bibinfo {volume} {98}},\ \bibinfo {pages}
  {012106} (\bibinfo {year} {2018})},\ \Eprint
  {http://arxiv.org/abs/arXiv:1805.00796} {arXiv:1805.00796} \BibitemShut
  {NoStop}%
\bibitem [{\citenamefont {Greechie}(1966)}]{greechie-66-PhD}%
  \BibitemOpen
  \bibfield  {author} {\bibinfo {author} {\bibfnamefont {Richard~Joseph}\
  \bibnamefont {Greechie}},\ }\emph {\bibinfo {title} {Orthomodular
  Lattices}},\ \href {https://ufdc.ufl.edu/UF00097858/00001/pdf} {Ph.D.
  thesis},\ \bibinfo  {school} {University of Florida}, \bibinfo {address}
  {Florida, USA} (\bibinfo {year} {1966})\BibitemShut {NoStop}%
\bibitem [{\citenamefont {Lov\'asz}(1979)}]{lovasz-79}%
  \BibitemOpen
  \bibfield  {author} {\bibinfo {author} {\bibfnamefont {L\'aszl\'o}\
  \bibnamefont {Lov\'asz}},\ }\bibfield  {title} {\enquote {\bibinfo {title}
  {On the {S}hannon capacity of a graph},}\ }\href {\doibase
  10.1109/TIT.1979.1055985} {\bibfield  {journal} {\bibinfo  {journal} {IEEE
  Transactions on Information Theory}\ }\textbf {\bibinfo {volume} {25}},\
  \bibinfo {pages} {1--7} (\bibinfo {year} {1979})}\BibitemShut {NoStop}%
\bibitem [{\citenamefont {Lov\'asz}\ \emph {et~al.}(1989)\citenamefont
  {Lov\'asz}, \citenamefont {Saks},\ and\ \citenamefont
  {Schrijver}}]{lovasz-89}%
  \BibitemOpen
  \bibfield  {author} {\bibinfo {author} {\bibfnamefont {L\'aszl\'o}\
  \bibnamefont {Lov\'asz}}, \bibinfo {author} {\bibfnamefont {M.}~\bibnamefont
  {Saks}}, \ and\ \bibinfo {author} {\bibfnamefont {Alexander}\ \bibnamefont
  {Schrijver}},\ }\bibfield  {title} {\enquote {\bibinfo {title} {Orthogonal
  representations and connectivity of graphs},}\ }\href {\doibase
  10.1016/0024-3795(89)90475-8} {\bibfield  {journal} {\bibinfo  {journal}
  {Linear Algebra and its Applications}\ }\textbf {\bibinfo {volume}
  {114-115}},\ \bibinfo {pages} {439--454} (\bibinfo {year} {1989})},\ \bibinfo
  {note} {special Issue Dedicated to Alan J. Hoffman}\BibitemShut {NoStop}%
\bibitem [{\citenamefont {Sol\'is-Encina}\ and\ \citenamefont
  {Portillo}(2015)}]{Portillo-2015}%
  \BibitemOpen
  \bibfield  {author} {\bibinfo {author} {\bibfnamefont {Alberto}\ \bibnamefont
  {Sol\'is-Encina}}\ and\ \bibinfo {author} {\bibfnamefont {Jos\'e~Ram\'on}\
  \bibnamefont {Portillo}},\ }\href {https://arxiv.org/abs/1504.03662}
  {\enquote {\bibinfo {title} {Orthogonal representation of graphs},}\ }
  (\bibinfo {year} {2015}),\ \Eprint {http://arxiv.org/abs/arXiv:1504.03662}
  {arXiv:1504.03662} \BibitemShut {NoStop}%
\bibitem [{\citenamefont {Tkadlec}(2000)}]{tkadlec-00}%
  \BibitemOpen
  \bibfield  {author} {\bibinfo {author} {\bibfnamefont {Josef}\ \bibnamefont
  {Tkadlec}},\ }\bibfield  {title} {\enquote {\bibinfo {title} {Diagrams of
  {K}ochen-{S}pecker type constructions},}\ }\href {\doibase
  10.1023/A:1003695317353} {\bibfield  {journal} {\bibinfo  {journal}
  {International Journal of Theoretical Physics}\ }\textbf {\bibinfo {volume}
  {39}},\ \bibinfo {pages} {921--926} (\bibinfo {year} {2000})}\BibitemShut
  {NoStop}%
\bibitem [{\citenamefont {Cabello}\ \emph {et~al.}(1996)\citenamefont
  {Cabello}, \citenamefont {Estebaranz},\ and\ \citenamefont
  {Garc{\'{i}}a-Alcaine}}]{cabello-96}%
  \BibitemOpen
  \bibfield  {author} {\bibinfo {author} {\bibfnamefont {Ad{\'{a}}n}\
  \bibnamefont {Cabello}}, \bibinfo {author} {\bibfnamefont {Jos{\'{e}}~M.}\
  \bibnamefont {Estebaranz}}, \ and\ \bibinfo {author} {\bibfnamefont
  {G.}~\bibnamefont {Garc{\'{i}}a-Alcaine}},\ }\bibfield  {title} {\enquote
  {\bibinfo {title} {{B}ell-{K}ochen-{S}pecker theorem: {A} proof with 18
  vectors},}\ }\href {\doibase 10.1016/0375-9601(96)00134-X} {\bibfield
  {journal} {\bibinfo  {journal} {Physics Letters A}\ }\textbf {\bibinfo
  {volume} {212}},\ \bibinfo {pages} {183--187} (\bibinfo {year} {1996})},\
  \Eprint {http://arxiv.org/abs/arXiv:quant-ph/9706009}
  {arXiv:quant-ph/9706009} \BibitemShut {NoStop}%
\bibitem [{\citenamefont {Abbott}\ \emph {et~al.}(2015)\citenamefont {Abbott},
  \citenamefont {Calude},\ and\ \citenamefont {Svozil}}]{2015-AnalyticKS}%
  \BibitemOpen
  \bibfield  {author} {\bibinfo {author} {\bibfnamefont {Alastair~A.}\
  \bibnamefont {Abbott}}, \bibinfo {author} {\bibfnamefont {Cristian~S.}\
  \bibnamefont {Calude}}, \ and\ \bibinfo {author} {\bibfnamefont {Karl}\
  \bibnamefont {Svozil}},\ }\bibfield  {title} {\enquote {\bibinfo {title} {A
  variant of the {K}ochen-{S}pecker theorem localising value indefiniteness},}\
  }\href {\doibase 10.1063/1.4931658} {\bibfield  {journal} {\bibinfo
  {journal} {Journal of Mathematical Physics}\ }\textbf {\bibinfo {volume}
  {56}},\ \bibinfo {eid} {102201} (\bibinfo {year} {2015})},\ \Eprint
  {http://arxiv.org/abs/arXiv:1503.01985} {arXiv:1503.01985} \BibitemShut
  {NoStop}%
\bibitem [{\citenamefont {Tutte}(1954)}]{tutte_1954}%
  \BibitemOpen
  \bibfield  {author} {\bibinfo {author} {\bibfnamefont {W.~T.}\ \bibnamefont
  {Tutte}},\ }\bibfield  {title} {\enquote {\bibinfo {title} {A short proof of
  the factor theorem for finite graphs},}\ }\href {\doibase
  10.4153/CJM-1954-033-3} {\bibfield  {journal} {\bibinfo  {journal} {Canadian
  Journal of Mathematics}\ }\textbf {\bibinfo {volume} {6}},\ \bibinfo {pages}
  {347--352} (\bibinfo {year} {1954})}\BibitemShut {NoStop}%
\bibitem [{\citenamefont {Szab\'o}(2009)}]{SZABO2009436}%
  \BibitemOpen
  \bibfield  {author} {\bibinfo {author} {\bibfnamefont {J\'acint}\
  \bibnamefont {Szab\'o}},\ }\bibfield  {title} {\enquote {\bibinfo {title}
  {Good characterizations for some degree constrained subgraphs},}\ }\href
  {\doibase 10.1016/j.jctb.2008.08.009} {\bibfield  {journal} {\bibinfo
  {journal} {Journal of Combinatorial Theory, Series B}\ }\textbf {\bibinfo
  {volume} {99}},\ \bibinfo {pages} {436--446} (\bibinfo {year}
  {2009})}\BibitemShut {NoStop}%
\bibitem [{\citenamefont {Ramanathan}\ \emph {et~al.}(2020)\citenamefont
  {Ramanathan}, \citenamefont {Rosicka}, \citenamefont {Horodecki},
  \citenamefont {Pironio}, \citenamefont {Horodecki},\ and\ \citenamefont
  {Horodecki}}]{Ramanathan-18}%
  \BibitemOpen
  \bibfield  {author} {\bibinfo {author} {\bibfnamefont {Ravishankar}\
  \bibnamefont {Ramanathan}}, \bibinfo {author} {\bibfnamefont {Monika}\
  \bibnamefont {Rosicka}}, \bibinfo {author} {\bibfnamefont {Karol}\
  \bibnamefont {Horodecki}}, \bibinfo {author} {\bibfnamefont {Stefano}\
  \bibnamefont {Pironio}}, \bibinfo {author} {\bibfnamefont {Micha{\l}}\
  \bibnamefont {Horodecki}}, \ and\ \bibinfo {author} {\bibfnamefont
  {Pawe{\l}}\ \bibnamefont {Horodecki}},\ }\href {\doibase
  10.22331/q-2020-08-14-308} {\enquote {\bibinfo {title} {Gadget structures in
  proofs of the {K}ochen-{S}pecker theorem},}\ } (\bibinfo {year} {2020}),\
  \Eprint {http://arxiv.org/abs/arXiv:1807.00113} {arXiv:1807.00113}
  \BibitemShut {NoStop}%
\bibitem [{\citenamefont {Svozil}(2020)}]{svozil-2017-b}%
  \BibitemOpen
  \bibfield  {author} {\bibinfo {author} {\bibfnamefont {Karl}\ \bibnamefont
  {Svozil}},\ }\bibfield  {title} {\enquote {\bibinfo {title} {What is so
  special about quantum clicks?}}\ }\href {\doibase 10.3390/e22060602}
  {\bibfield  {journal} {\bibinfo  {journal} {Entropy}\ }\textbf {\bibinfo
  {volume} {22}},\ \bibinfo {pages} {602} (\bibinfo {year} {2020})},\ \Eprint
  {http://arxiv.org/abs/arXiv:1707.08915} {arXiv:1707.08915} \BibitemShut
  {NoStop}%
\bibitem [{\citenamefont {Travis}(1962)}]{travis-mt-62}%
  \BibitemOpen
  \bibfield  {author} {\bibinfo {author} {\bibfnamefont {Raymond~David}\
  \bibnamefont {Travis}},\ }\emph {\bibinfo {title} {The Logic of a Physical
  Theory}},\ \href@noop {} {Master's thesis},\ \bibinfo  {school} {Wayne State
  University}, \bibinfo {address} {Detroit, Michigan, USA} (\bibinfo {year}
  {1962}),\ \bibinfo {note} {{M}aster's {T}hesis under the supervision of David
  J. Foulis}\BibitemShut {NoStop}%
\bibitem [{\citenamefont {Albertson}(1998)}]{ALBERTSON1998189}%
  \BibitemOpen
  \bibfield  {author} {\bibinfo {author} {\bibfnamefont {Michael~O.}\
  \bibnamefont {Albertson}},\ }\bibfield  {title} {\enquote {\bibinfo {title}
  {You can't paint yourself into a corner},}\ }\href {\doibase
  10.1006/jctb.1998.1827} {\bibfield  {journal} {\bibinfo  {journal} {Journal
  of Combinatorial Theory, Series B}\ }\textbf {\bibinfo {volume} {73}},\
  \bibinfo {pages} {189--194} (\bibinfo {year} {1998})}\BibitemShut {NoStop}%
\bibitem [{\citenamefont {Cabello}(1996)}]{Cabello-1996-diss}%
  \BibitemOpen
  \bibfield  {author} {\bibinfo {author} {\bibfnamefont {Ad\'an}\ \bibnamefont
  {Cabello}},\ }\emph {\bibinfo {title} {Pruebas algebraicas de imposibilidad
  de variables ocultas en mec{\'a}nica cu{\'a}ntica}},\ \href
  {http://eprints.ucm.es/1961/1/T21049.pdf} {Ph.D. thesis},\ \bibinfo  {school}
  {Universidad Complutense de Madrid}, \bibinfo {address} {Madrid, Spain}
  (\bibinfo {year} {1996})\BibitemShut {NoStop}%
\bibitem [{\citenamefont {Brooks}(1941)}]{Brooks1941}%
  \BibitemOpen
  \bibfield  {author} {\bibinfo {author} {\bibfnamefont {R.~L.}\ \bibnamefont
  {Brooks}},\ }\bibfield  {title} {\enquote {\bibinfo {title} {On colouring the
  nodes of a network},}\ }\href {\doibase 10.1017/s030500410002168x} {\bibfield
   {journal} {\bibinfo  {journal} {Mathematical Proceedings of the Cambridge
  Philosophical Society}\ }\textbf {\bibinfo {volume} {37}},\ \bibinfo {pages}
  {194--197} (\bibinfo {year} {1941})}\BibitemShut {NoStop}%
\bibitem [{\citenamefont {Lov{\'a}sz}(1975)}]{Lovasz1975}%
  \BibitemOpen
  \bibfield  {author} {\bibinfo {author} {\bibfnamefont {L\'aszl\'o}\
  \bibnamefont {Lov{\'a}sz}},\ }\bibfield  {title} {\enquote {\bibinfo {title}
  {Three short proofs in graph theory},}\ }\href {\doibase
  10.1016/0095-8956(75)90089-1} {\bibfield  {journal} {\bibinfo  {journal}
  {Journal of Combinatorial Theory, Series B}\ }\textbf {\bibinfo {volume}
  {19}},\ \bibinfo {pages} {269--271} (\bibinfo {year} {1975})}\BibitemShut
  {NoStop}%
\bibitem [{\citenamefont {Wright}(1978)}]{wright:pent}%
  \BibitemOpen
  \bibfield  {author} {\bibinfo {author} {\bibfnamefont {Ron}\ \bibnamefont
  {Wright}},\ }\bibfield  {title} {\enquote {\bibinfo {title} {The state of the
  pentagon. {A} nonclassical example},}\ }in\ \href {\doibase
  10.1016/B978-0-12-473250-6.50015-7} {\emph {\bibinfo {booktitle}
  {Mathematical Foundations of Quantum Theory}}},\ \bibinfo {editor} {edited
  by\ \bibinfo {editor} {\bibfnamefont {A.~R.}\ \bibnamefont {Marlow}}}\
  (\bibinfo  {publisher} {Academic Press},\ \bibinfo {address} {New York},\
  \bibinfo {year} {1978})\ pp.\ \bibinfo {pages} {255--274}\BibitemShut
  {NoStop}%
\bibitem [{\citenamefont {Kochen}\ and\ \citenamefont
  {Specker}(1965)}]{kochen2}%
  \BibitemOpen
  \bibfield  {author} {\bibinfo {author} {\bibfnamefont {Simon}\ \bibnamefont
  {Kochen}}\ and\ \bibinfo {author} {\bibfnamefont {Ernst~P.}\ \bibnamefont
  {Specker}},\ }\bibfield  {title} {\enquote {\bibinfo {title} {Logical
  structures arising in quantum theory},}\ }in\ \href {\doibase
  978-3-0348-9259-9_19} {\emph {\bibinfo {booktitle} {The Theory of Models,
  {P}roceedings of the 1963 International Symposium at {B}erkeley}}},\ \bibinfo
  {editor} {edited by\ \bibinfo {editor} {\bibfnamefont {J.~W.}\ \bibnamefont
  {Addison}}, \bibinfo {editor} {\bibfnamefont {Leon}\ \bibnamefont {Henkin}},
  \ and\ \bibinfo {editor} {\bibfnamefont {Alfred}\ \bibnamefont {Tarski}}}\
  (\bibinfo  {publisher} {North Holland},\ \bibinfo {address} {Amsterdam, New
  York, Oxford},\ \bibinfo {year} {1965})\ pp.\ \bibinfo {pages} {177--189},\
  \bibinfo {note} {reprinted in
  Ref.~\cite[pp.~209-221]{specker-ges}}\BibitemShut {NoStop}%
\bibitem [{\citenamefont {Specker}(1990)}]{specker-ges}%
  \BibitemOpen
  \bibfield  {author} {\bibinfo {author} {\bibfnamefont {Ernst}\ \bibnamefont
  {Specker}},\ }\href {\doibase 10.1007/978-3-0348-9259-9} {\emph {\bibinfo
  {title} {Selecta}}}\ (\bibinfo  {publisher} {Birkh{\"{a}}user Verlag},\
  \bibinfo {address} {Basel},\ \bibinfo {year} {1990})\BibitemShut {NoStop}%
\bibitem [{\citenamefont {Greechie}(1974)}]{Greechie1974}%
  \BibitemOpen
  \bibfield  {author} {\bibinfo {author} {\bibfnamefont {Richard~Joseph}\
  \bibnamefont {Greechie}},\ }\bibfield  {title} {\enquote {\bibinfo {title}
  {Some results from the combinatorial approach to quantum logic},}\ }\href
  {\doibase 10.1007/bf00484954} {\bibfield  {journal} {\bibinfo  {journal}
  {Synthese}\ }\textbf {\bibinfo {volume} {29}},\ \bibinfo {pages} {113--127}
  (\bibinfo {year} {1974})}\BibitemShut {NoStop}%
\bibitem [{\citenamefont {Svozil}(2022)}]{svozil-2020-ghz}%
  \BibitemOpen
  \bibfield  {author} {\bibinfo {author} {\bibfnamefont {Karl}\ \bibnamefont
  {Svozil}},\ }\bibfield  {title} {\enquote {\bibinfo {title} {Generalized
  {G}reenberger-{H}orne-{Z}eilinger arguments from quantum logical analysis},}\
  }\href {\doibase 10.1007/s10701-021-00515-z} {\bibfield  {journal} {\bibinfo
  {journal} {Foundations of Physics}\ }\textbf {\bibinfo {volume} {52}},\
  \bibinfo {pages} {1--23} (\bibinfo {year} {2022})},\ \Eprint
  {http://arxiv.org/abs/arXiv:2006.14623} {arXiv:2006.14623} \BibitemShut
  {NoStop}%
\bibitem [{\citenamefont {Holland}(1975)}]{Holland1975}%
  \BibitemOpen
  \bibfield  {author} {\bibinfo {author} {\bibfnamefont {Samuel~S.}\
  \bibnamefont {Holland}},\ }\bibfield  {title} {\enquote {\bibinfo {title}
  {The current interest in orthomodular lattices},}\ }in\ \href {\doibase
  10.1007/978-94-010-1795-4\_25} {\emph {\bibinfo {booktitle} {The
  Logico-Algebraic Approach to Quantum Mechanics: {V}olume {I}: {H}istorical
  Evolution}}},\ \bibinfo {editor} {edited by\ \bibinfo {editor} {\bibfnamefont
  {C.~A.}\ \bibnamefont {Hooker}}}\ (\bibinfo  {publisher} {Springer
  Netherlands},\ \bibinfo {address} {Dordrecht},\ \bibinfo {year} {1975})\ pp.\
  \bibinfo {pages} {437--496}\BibitemShut {NoStop}%
\bibitem [{\citenamefont {Bennett}(1970)}]{Bennett-MC-1970}%
  \BibitemOpen
  \bibfield  {author} {\bibinfo {author} {\bibfnamefont {Mary~Katherine}\
  \bibnamefont {Bennett}},\ }\bibfield  {title} {\enquote {\bibinfo {title} {A
  finite orthomodular lattice which does not admit a full set of states},}\
  }\href {\doibase 10.1137/1012047} {\bibfield  {journal} {\bibinfo  {journal}
  {{SIAM} Review}\ }\textbf {\bibinfo {volume} {12}},\ \bibinfo {pages}
  {267--271} (\bibinfo {year} {1970})}\BibitemShut {NoStop}%
\bibitem [{\citenamefont {Greechie}(1976)}]{Greechie-Suppes1976}%
  \BibitemOpen
  \bibfield  {author} {\bibinfo {author} {\bibfnamefont {Richard~Joseph}\
  \bibnamefont {Greechie}},\ }\bibfield  {title} {\enquote {\bibinfo {title}
  {Some results from the combinatorial approach to quantum logic},}\ }in\ \href
  {\doibase 10.1007/978-94-010-9466-5} {\emph {\bibinfo {booktitle} {Logic and
  Probability in Quantum Mechanics}}},\ \bibinfo {editor} {edited by\ \bibinfo
  {editor} {\bibfnamefont {Patrick}\ \bibnamefont {Suppes}}}\ (\bibinfo
  {publisher} {Springer Netherlands},\ \bibinfo {address} {Dordrecht},\
  \bibinfo {year} {1976})\ pp.\ \bibinfo {pages} {105--119}\BibitemShut
  {NoStop}%
\end{thebibliography}

%

\end{document}